\long\def\symbolfootnote[#1]#2{\begingroup%
\def\thefootnote{\fnsymbol{footnote}}\footnote[#1]{#2}\endgroup}
\newtheorem{theorem}{\sffamily Theorem}
\title{Interactive Design and Optics-Based Visualization of Arbitrary Non-Euclidean Kaleidoscopic Orbifolds}
\author{Jinta Zheng, Eugene Zhang, \textit{Senior Member, IEEE}, and Yue Zhang, \textit{Member, IEEE}}
\abstract{Orbifolds are a modern mathematical concept that arises in the research of hyperbolic geometry with applications in computer graphics and visualization. In this paper, we make use of rooms with mirrors as the visual metaphor for orbifolds. Given any arbitrary two-dimensional kaleidoscopic orbifold, we provide an algorithm to construct a Euclidean, spherical, or hyperbolic polygon to match the orbifold. This polygon is then used to create a room for which the polygon serves as the floor and the ceiling. With our system that implements M\"obius transformations, the user can interactively edit the scene and see the reflections of the edited objects. To correctly visualize non-Euclidean orbifolds, we adapt the rendering algorithms to account for the geodesics in these spaces, which light rays follow. Our interactive orbifold design system allows the user to create arbitrary {\em two-dimensional kaleidoscopic} orbifolds. In addition, our mirror-based orbifold visualization approach has the potential of helping our users gain insight on the orbifold, including its orbifold notation as well as its universal cover, which can also be the spherical space and the hyperbolic space.
}
\keywords{Kaleidoscopic Orbifolds, Orbifold Visualization, Math Visualization, Orbifold Construction, Spherical Geometry, Hyperbolic Geometry}
\begin{document}



\firstsection{Introduction} 
\label{sec:intro}

\maketitle

\begin{figure*}[t]
	\centering%
	$\begin{array}{@{\hspace{0.0in}}c@{\hspace{0.1in}}c@{\hspace{0.1in}}c}
	\includegraphics[width=2.1in]{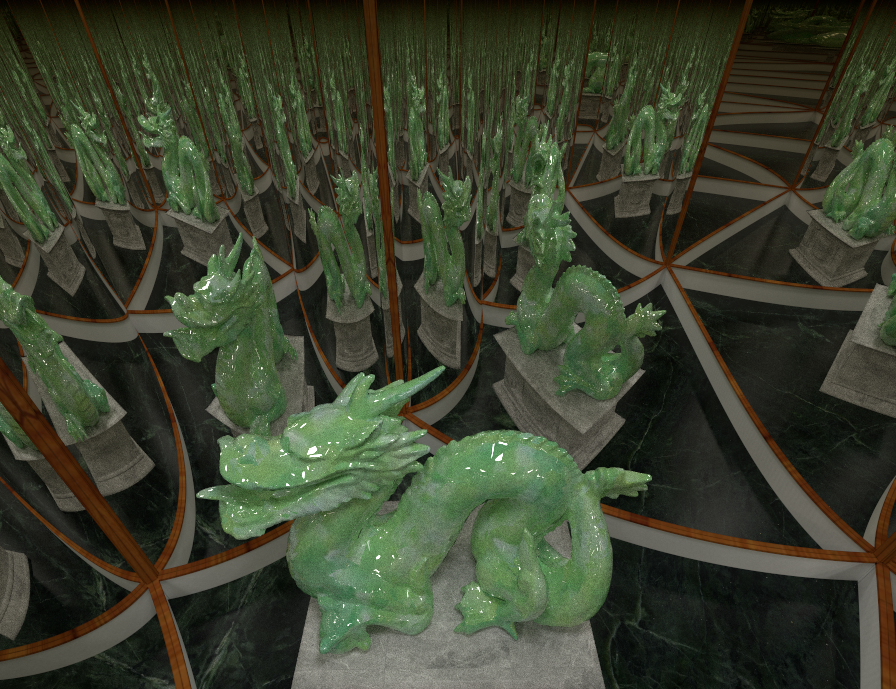}
	&\includegraphics[width=2.1in]{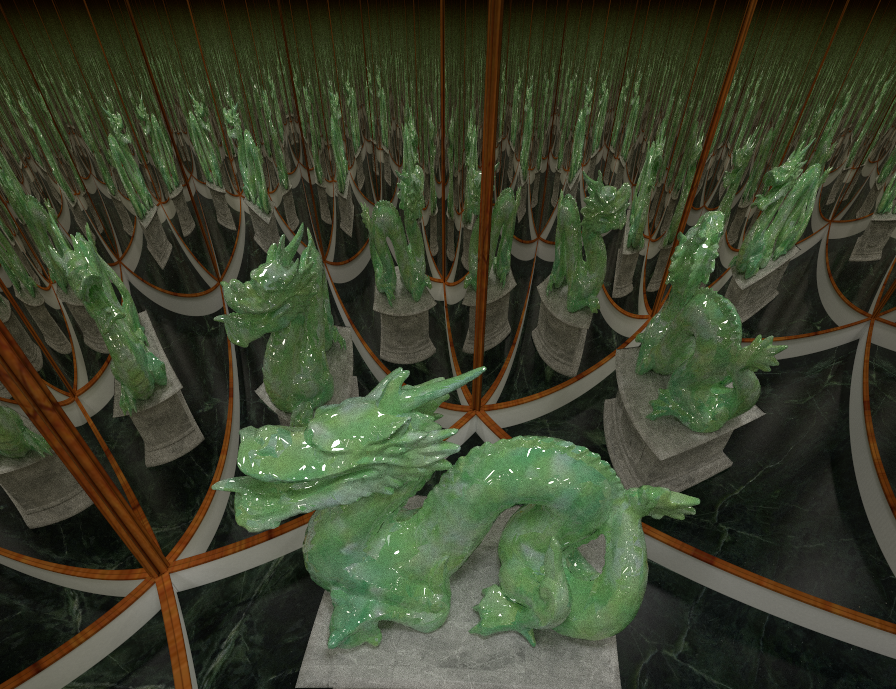}
	&\includegraphics[width=2.16in]{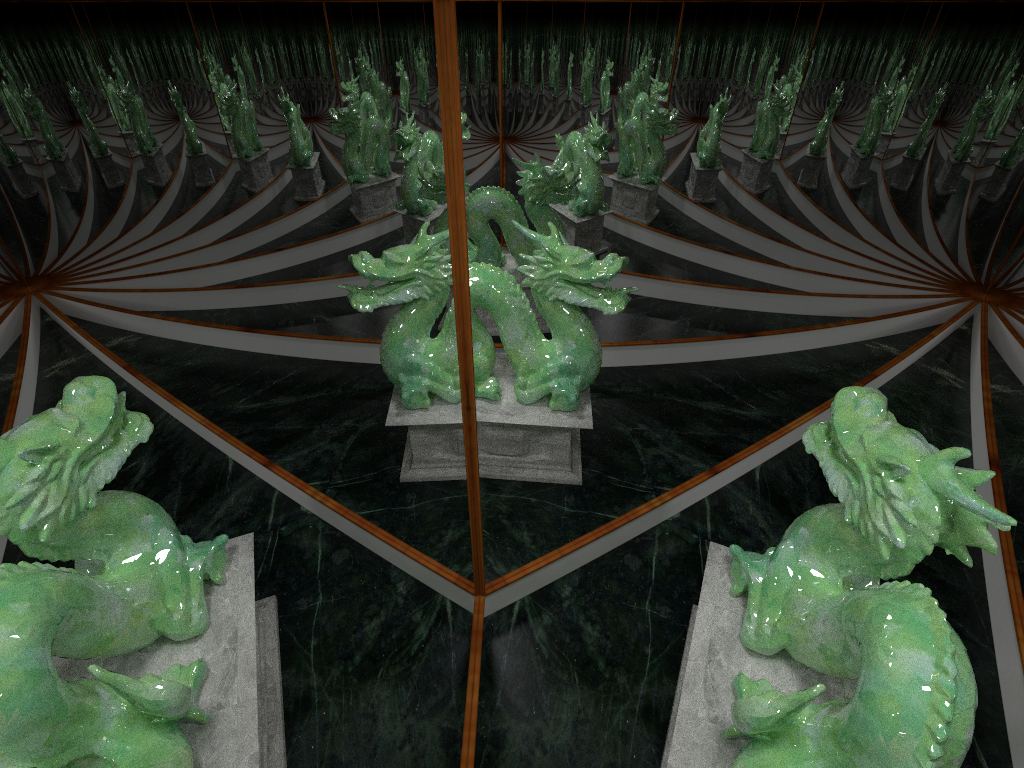}
\\
    (a) & (b) &(c) \\
	\end{array}$
	\caption[]{In (a-b), we show the comparison of the rendering of a hyperbolic orbifold with straight rays as in the Euclidean space (a) and the geodesics in the hyperbolic space (b).  In (c), we render a triangular non-orbifold with corner angles \textbf{$\frac{2\pi}{3}$-$\frac{\pi}{12}$-$\frac{\pi}{12}$}. Notice the dragon now has two heads.  }
	\label{fig:str_curved_rays}
\end{figure*}

\begin{figure}[t]
	\centering%
	$\begin{array}{@{\hspace{0.0in}}c}
	\includegraphics[width=3.5in]{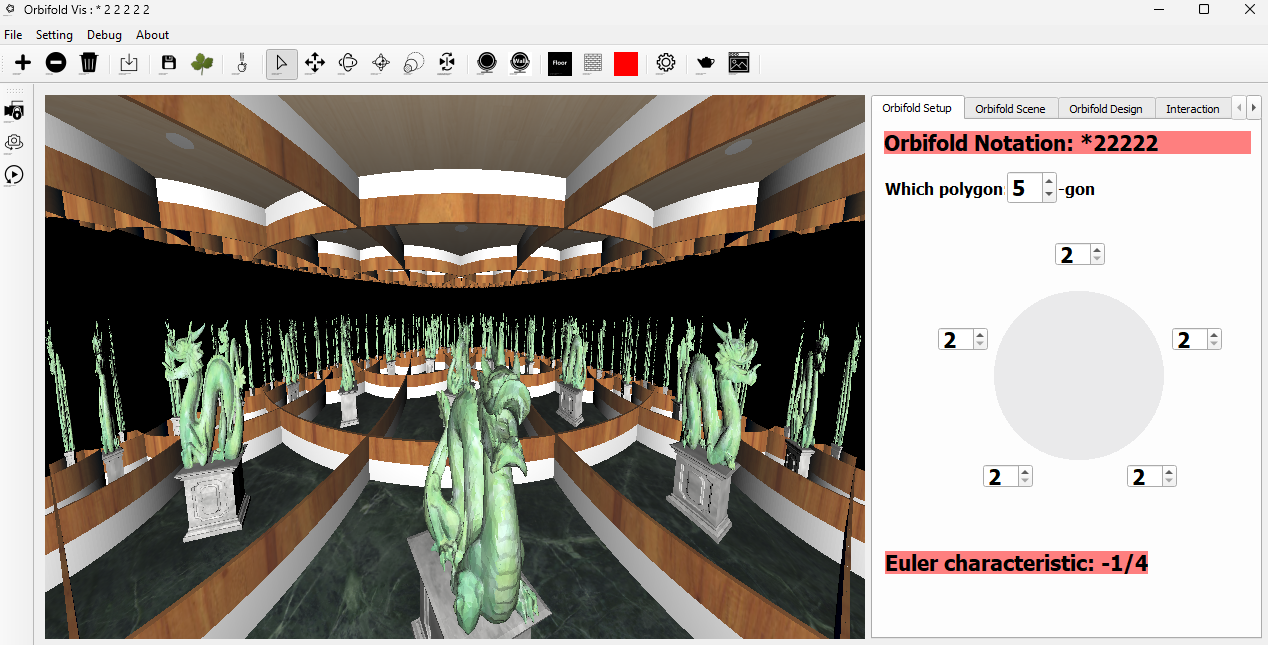} \\
	\end{array}$
	\caption[]{The interface of our design system.}
	\label{fig:interactive_system}
\end{figure}

Orbifolds are a modern mathematical concept originated from the study of low-dimensional topology~\cite{thurston1997three}. This concept has been used to study the geometric structures of hyperbolic spaces and to prove in $2003$ the famous Poincar\'e conjecture for the three-dimensional case, which is the last case and thus the hardest case to be addressed~\cite{Shea:07}. In addition, orbifolds have been used to describe spatial symmetries in String Theory~\cite{Giaccari_2023}. Orbifold theory has found applications in tensor field topology~\cite{Palacios:16,Roy:19,Qu:21}, remeshing~\cite{Kaelberer:07,Bommes:09,Nieser:12}, non-photorealistic rendering~\cite{Hertzmann:00,Zhang:07,Palacios:07}, and texture synthesis~\cite{Aigerman:2015,aigerman2016hyperbolic,Aigerman:2017,Palacios:17}.

The concept of orbifolds can be difficult to digest, as its definition involves various mathematical notions from topology and abstract algebra, such as {\em Hausdorff spaces}, {\em groups and group actions}, {\em charts}, and {\em gluing maps}. The problem is further compounded by the fact that most of the orbifolds are non-Euclidean, thus making the understanding and visualization of orbifolds more difficult. In this paper, we focus on two-dimensional {\em kaleidoscopic} orbifolds that are generated by reflections in the Euclidean plane, the sphere, and the hyperbolic space. We have built an interactive system with which our users can create any arbitrary two-dimensional kaleidoscopic orbifolds and interact with them to gain insight into the orbifolds.

Most existing work on visualizing two-dimensional orbifolds make use of texture patterns that tile the plane seamlessly.
Such an approach is commonly used to create illustrations. We observe the recent trend of engaging learning through 3D graphics and animations and employ an approach inspired by the kaleidoscopes. By placing some simple objects inside the kaleidoscope, fascinating images appear when we look through the viewing hole. To be more engaging, our system promotes a visual experience of being inside the kaleidoscope entirely. We utilize a room with mirrors as a visual metaphor for a two-dimensional orbifold.

Interestingly, the reflectional symmetries in a kaleidoscope correspond to the behaviors of a particular Euclidean orbifold. As shown in Figure~\ref{fig:teaser}, our system produces orbifolds (configuration of the ceiling and the floor) and the symmetry that each orbifold induces. In addition, through the bending of the mirror frames and the unfamiliar deformations of Buddha in (a) and (c), the notions of spherical geometry and hyperbolic geometry are visually delivered, respectively.

While there are only a handful of Euclidean orbifolds, there are infinitely many spherical and hyperbolic orbifolds. In fact, any polygon whose corner angles can each be expressed as $\frac{\pi}{k}$ ($k \in \mathbb{N}^+$) is an orbifold. To the best of our knowledge, there is no algorithm published that allows the realization of arbitrary such polygons when their natural spaces are hyperbolic. Most available tools focus on regular polygons. For arbitrary polygons that represent an orbifold, the lengths of the edges are challenging to determine.

In this paper, we address this difficulty by providing an algorithm that can interactively realize any orbifold, whether spherical, hyperbolic, or Euclidean (Section~\ref{sec:construction}). As part of our algorithm, we provide a complete enumeration of two-dimensional kaleidoscopic orbifolds based on the cardinality of the underlying polygon and the type of the {\em universal cover} (Section~\ref{sec:enumeration}). With this ability, any two-dimensional orbifold can be converted to a room, whose ceiling and floor have the configuration of the polygon. Our system further allows interactive scene editing, with the room and its virtual copies being visible at the same time (Figure~\ref{fig:interactive_system}). Furthermore, the creation of the reflected copies in the underlying space ({\em universal cover}) that is either the sphere or the hyperbolic space can reinforce the perception of geometric deformations of these non-Euclidean spaces. To achieve this, we present a system to generate the universal cover of any polygonal orbifold and provide interactive updates to the virtual rooms in the universal cover through M\"obius transformations.

Light rays travel along the geodesics in the spherical and hyperbolic spaces. When rendering a non-Euclidean scene using Euclidean straight lines, incorrect appearances result as shown in Figure~\ref{fig:str_curved_rays} (a). We modify the rendering algorithms to account for the correct paths for rays (Figure~\ref{fig:str_curved_rays} (b)). In addition, by adjusting the attenuation of some or all the mirrors in the scene (Figure~\ref{fig:emphasis}), we can further emphasize the orbifold itself (a), intensify the emphasis of the universal cover (c), or highlight the {\em translational cover} of the orbifold (the room and an adjacent virtual room). With our design system and visualization, users can customize their orbifolds for their purposes.

\begin{figure*}[t]
	\centering%
	$\begin{array}{@{\hspace{0.0in}}c@{\hspace{0.1in}}c@{\hspace{0.1in}}c}
	\includegraphics[width=2.2in]{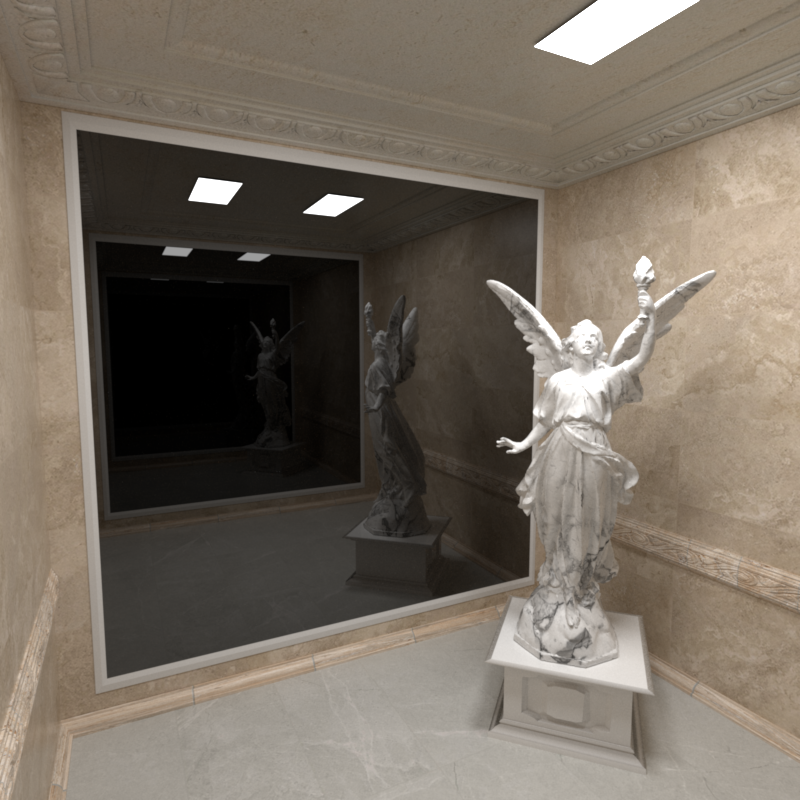}
&\includegraphics[width=2.2in]{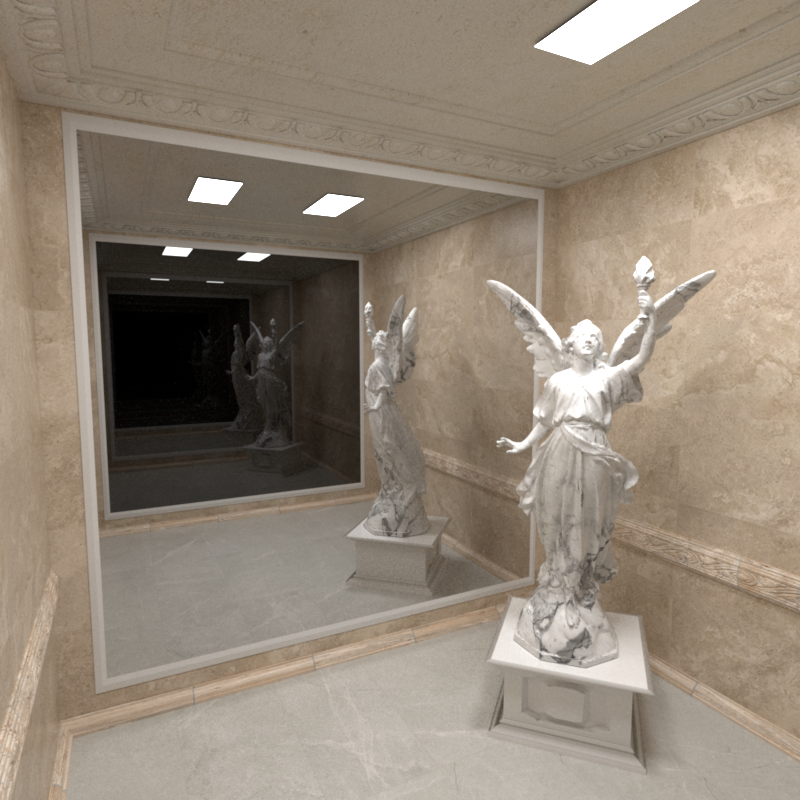}
	&\includegraphics[width=2.2in]{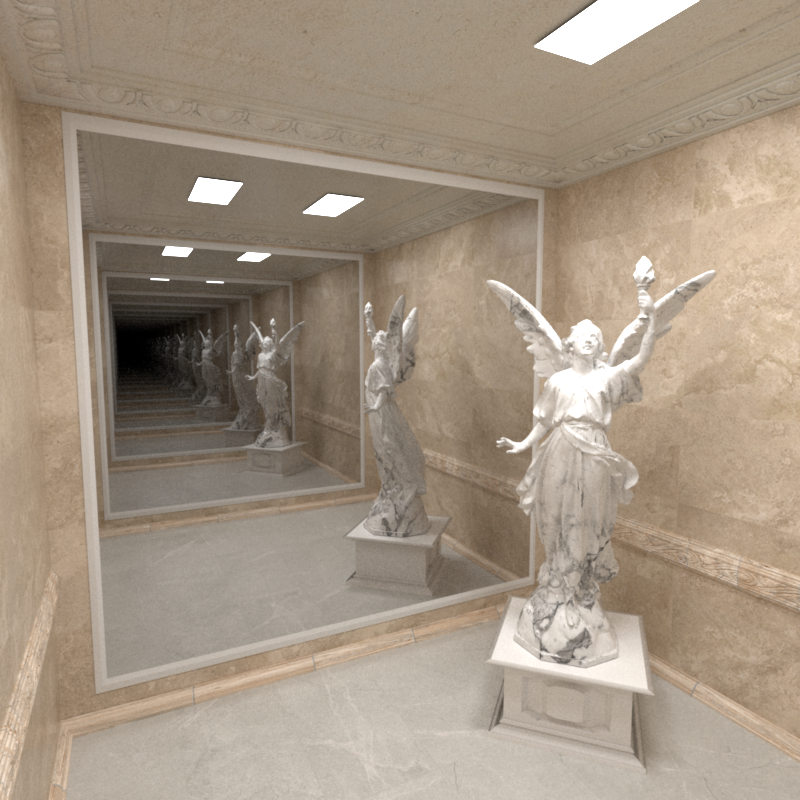} \\
	(a) & (b)  & (c)
	\end{array}$
	\caption[]{By varying the attenuations of the mirrors, our visualization can emphasize the orbifold (a), its translational cover (b), or its universal cover (c). }
	\label{fig:emphasis}
\end{figure*}

\section{Related Work}\label{sec:related_work}

Our work follows recent research in mathematics visualization, such as quaternions~\cite{Hanson:2005}, knots and links~\cite{vanWijk:05}, and branched covering spaces~\cite{Roy:18}.

Orbifolds are a modern mathematical concept~\cite{thurston1997three}. The notion of orbifolds has found applications in texture synthesis~\cite{Aigerman:2015,aigerman2016hyperbolic,Aigerman:2017}. Nieser and Polthier provide visualization of analytic functions over the complex plane~\cite{Nieser:10}, while Roy et al.~\cite{Roy:18} visualize the notion and properties of branched covering spaces with respect to $N$-way rotational symmetry ($N$-RoSy) fields~\cite{Palacios:07}. All of this work focuses on the visualization and processing of orbifolds on two-dimensional surfaces. Moreover, the orbifolds that they address only possess rotational symmetries, which are a subclass of reflectional symmetries. In this paper, we address orbifolds that are generated by reflections, which include not only reflectional symmetries but also rotational and translational symmetries.

Conway et al.~\cite{conway2008symmetries} explain the concepts and results related to planar orbifolds by using popular artwork containing textures with symmetries. Their approach focuses on Euclidean orbifolds. In our work, we provide a system to generate a room that matches {\em any} given two-dimensional kaleidoscopic orbifold, even when its universal cover is a non-Euclidean space. Furthermore, we make use of the mirror metaphor to leverage real-life experience with mirrors, which provides a complementary approach to the texture-based visualization of orbifolds.

Our mirror metaphor turns a two-dimensional orbifold into a three-dimensional room, which can also be considered as a three-dimensional orbifold that is the product of the two-dimensional orbifold (floor and ceiling) with a line segment (the height of the room). There has been some past research on visualizing three-dimensional orbifolds~\cite{Berger:2015,Novello:2020}, with a focus on the three-dimensional sphere $\mathbb{S}^3$ and three-dimensional hyperbolic space $\mathbb{H}^3$. In these spaces, the geodesics are either a circular arc or a hyperbola. In contrast, the geodesics in our product spaces are spirals, which makes ray-triangle intersection different from those in $\mathbb{S}^3$ and $\mathbb{H}^3$. Moreover, past research often focuses on using some famous orbifolds in $\mathbb{S}^3$ and $\mathbb{H}^3$ such as the Poincar\'e sphere and the mirror dodecahedron. In our paper, we allow the visualization of any arbitrary two-dimensional orbifolds.

\section{Background on Orbifolds}

In this section, we review necessary mathematical background on orbifolds used in this paper which include the concepts of {\em groups} and {\em group actions}~\cite{Humphreys:96}, {\em orbifolds}~\cite{Milley:98,Cooper:00}, and {\em non-Euclidean spaces}~\cite{coxeter1998non}. For a rigorous definition of these concepts, we refer our readers to the aforementioned references.

\begin{figure}[t]
	\centering%
	$\begin{array}{@{\hspace{0.0in}}c}
	\includegraphics[width=2.5in]{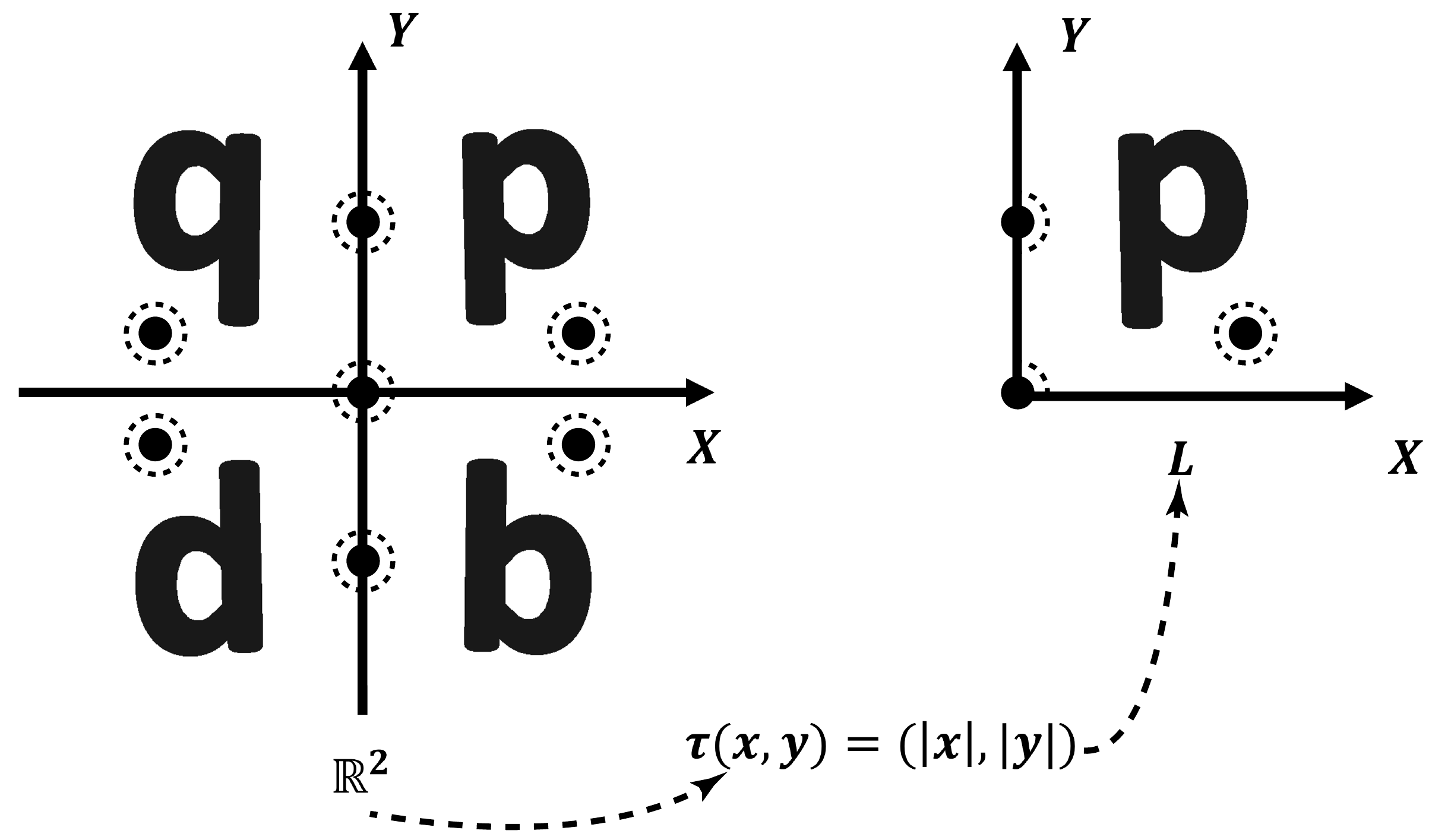}
	\end{array}$
	\caption[]{An orbifold (right: the first quadrant in the real plane) and its universal cover (left: all four quadrants) are related by a covering map $\tau(x, y)=(|x|, |y|)$.  The neighborhood of the origin in the orbifold (right) is a quarter disk. The map $\tau$ introduces a symmetry group consisting a horizontal reflection (the letter $p$ to $q$), a vertical reflection (the letter $p$ to $b$), and a rotation by $\pi$ (the letter $p$ to $d$).  The symmetry group is the Dihedral group $\mathbb{D}_2$.}
	\label{fig:bcs}
\end{figure}

An orbifold $O$ is a topological space $X$ paired with a discrete symmetry group $G$ such that $X$ locally resembles a Euclidean disk under the action of $G$. To better illustrate this, consider the space $L=\{(x, y) | x, y \ge 0 \}$ (Figure~\ref{fig:bcs} (right)). For each point in the first quadrant, we can find a small enough disk-shaped neighborhood. However, for a point on the positive $Y$-axis, there is a neighborhood of the shape of a half-disk which corresponds to a full disk in the Euclidean plane (Figure~\ref{fig:bcs} (left)) under the reflection across the $Y$-axis. In other words, the union of the half disk in the first quadrant and its mirror reflection form a full disk. Finally, the origin has a quarter-disk-shaped neighborhood (Figure~\ref{fig:bcs} (right)) which corresponds to a full disk in the Euclidean plane (Figure~\ref{fig:bcs} (left)) when being combined with its reflection across the $X$-axis (in the fourth quadrant), the reflection across the $Y$-axis (in the second quadrant), and the rotation by $\pi$ around the origin (in the third quadrant). Thus, $L$ is an orbifold.

Globally, we can see that $L$ is the range of the following function $\tau(x, y)=(|x|, |y|)$, which introduces a map from $\mathbb{R}^2$ to $L$ with the symmetry illustrated as follows. The letter $p$ (Figure~\ref{fig:bcs} (right)) corresponds to the letter $q$ in the second quadrant (Figure~\ref{fig:bcs} (left)) through the reflection across the $Y$-axis and the letter $b$ in the fourth quadrant through the reflection across the $X$-axis. In addition, it corresponds to the letter $d$ in the third quadrant through a rotation of $\pi$ around the origin, which is a composition of the two aforementioned reflections. Thus, the symmetry induced by the map $\tau$ leads to a symmetry group of four elements: the identity, two reflections, and one rotation. The group is the {\em Dihedral group} of order $2$, i.e. $\mathbb{D}_2$, which, when acted on $\mathbb{R}^2$, leads to the orbifold $L$. It has a {\em corner point} at the origin and two mirror lines (the positive $X$-axis and the positive $Y$-axis).

\begin{figure}[t]
	\centering%
	$\begin{array}{@{\hspace{0.0in}}c}
	\includegraphics[width=3.2in]{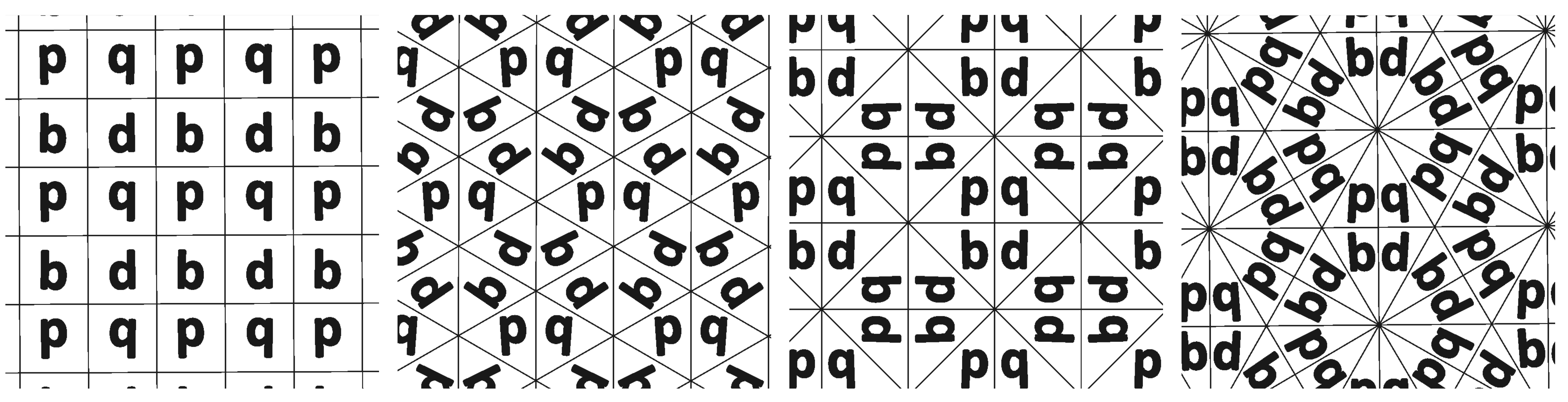} \\
	\put( -110,0){(a)  \textbf{$*2222$}}
	\put( -45,0){(b)  \textbf{$*333$}}
	\put( 13,0){(c)  \textbf{$*244$}}
	\put( 75,0){(d)  \textbf{$*236$}}
	\end{array}$
	\caption[]{The four Euclidean orbifolds.}
	\label{fig:2d_Euclidean}
\end{figure}

In general, a two-dimensional {\em kaleidoscopic} orbifold $O$ is a polygon with a symmetry group induced by reflections across all of its edges. Without causing ambiguity, we also refer to the polygon as $O$. Each edge of the polygon is thus a mirror line, and every vertex of the polygon is a corner point corresponding to the symmetry of $\mathbb{D}_k$, the dihedral group of order $k$. Note that $\mathbb{D}_k$ consists of $k$ rotations (including the identity) and $k$ reflections. In Figure~\ref{fig:2d_Euclidean} we show four such orbifolds, whose polygons have a configurations of a square (a), a $60^\circ-60^\circ-60^\circ$ triangle (b), a $90^\circ-45^\circ-45^\circ$ triangle (c), and a $90^\circ-60^\circ-30^\circ$ triangle (d). These orbifolds are given the {\em orbifold notations} (a) \textbf{$*2222$}, (b) \textbf{$*333$}, (c) \textbf{$*244$}, and (d) \textbf{$*236$}, respectively. A generic kaleidoscopic orbifold corresponding to an $N$-gon $O$ is given the notation \textbf{$*k_1 ... k_N$} where the \textbf{$*$} indicates the existence of the mirror and \textbf{$k_i$} implies that the angle of the polygon at the $i$-th corner is $\frac{\pi}{k_i}$.

An orbifold (the polygon) and all of its virtual copies through its symmetry group can seamlessly tile a space, which is its {\em universal cover}. The aforementioned orbifolds are kaleidoscopic orbifolds whose universal cover is the Euclidean plane, thus {\em Euclidean orbifolds}. Each Euclidean orbifold has a {\em translational cover}, which, along with its translational copies, form the universal cover. The translational cover of \textbf{$*2222$} consists of the orbifold, two of its reflections, and one rotation by $\pi$ (Figure~\ref{fig:2d_Euclidean} (a): any $2\times 2$ subgrid with the letters $q$, $p$, $d$, and $b$). The translations needed to generate the universal cover is the Gaussian integer grid $\mathbb{Z}[\textit{i}]$~\cite{GREAVES:12}. The translational covers of the other Euclidean orbifolds respectively consist of six copies arranged in a hexagon (Figure~\ref{fig:2d_Euclidean} (b): \textbf{$*333$}), eight copies arranged in an octagon (Figure~\ref{fig:2d_Euclidean}(c): \textbf{$*244$}), and twelve copies arranged in a dodecagon (Figure~\ref{fig:2d_Euclidean}(d): \textbf{$*236$}). The set of translations for \textbf{$*244$} is also $\mathbb{Z}[\textit{i}]$. On the other hand, the set of translations for \textbf{$*333$} and \textbf{$*236$} is the Eisenstein integer grid $\mathbb{Z}[\textit{$\omega$}]$~\cite{GREAVES:12} where $\omega = \frac{-1+\sqrt 3 i}{2}$.

While it may seem that these are the only kaleidoscopic orbifolds and that all kaleidoscopic orbifolds must be triangular or rectangular, there are many more. In fact, given an arbitrary polygon with at least three sides and whose corner angles divide $\pi$ individually, there is an orbifold that corresponds to the polygon. Figure~\ref{fig:non_Euclidean_room} shows a room with three, four, and five mirrors, respectively. However, these orbifolds cannot tile the Euclidean plane as their universal covers are either the unit sphere (spherical orbifolds) or the hyperbolic plane (hyperbolic orbifolds). The hyperbolic plane can be modeled as the upper sheet of the double-sheet hyperboloid $z^2 - x^2 - y^2=1$ (Figure~\ref{fig:hyper_hyperbolid}). Like the Euclidean orbifolds, both spherical orbifolds and hyperbolic orbifolds are polygons whose edges follow the geodesics in their universal cover. The geodesics in the unit sphere are the great circles, and the geodesic passing through two mutually distinct points $p$ and $q$ in the hyperbolic space is the intersection of the plane containing $p$, $q$ and the vertex of the lower-sheet of the hyperboloid (Figure~\ref{fig:hyper_hyperbolid}: the curve passing through $p$ and $q$). In fact, given an orbifold $O=$\textbf{$*k_1 ... k_N$} where $N$ is the number of walls and $k_i >1$ ($1\le i\le N$), its universal cover is decided by the {\em Euler characteristic} of the orbifold as follows:

\begin{figure}[t]
	\centering%
	$\begin{array}{@{\hspace{0.0in}}c}
	\includegraphics[width=2.1in]{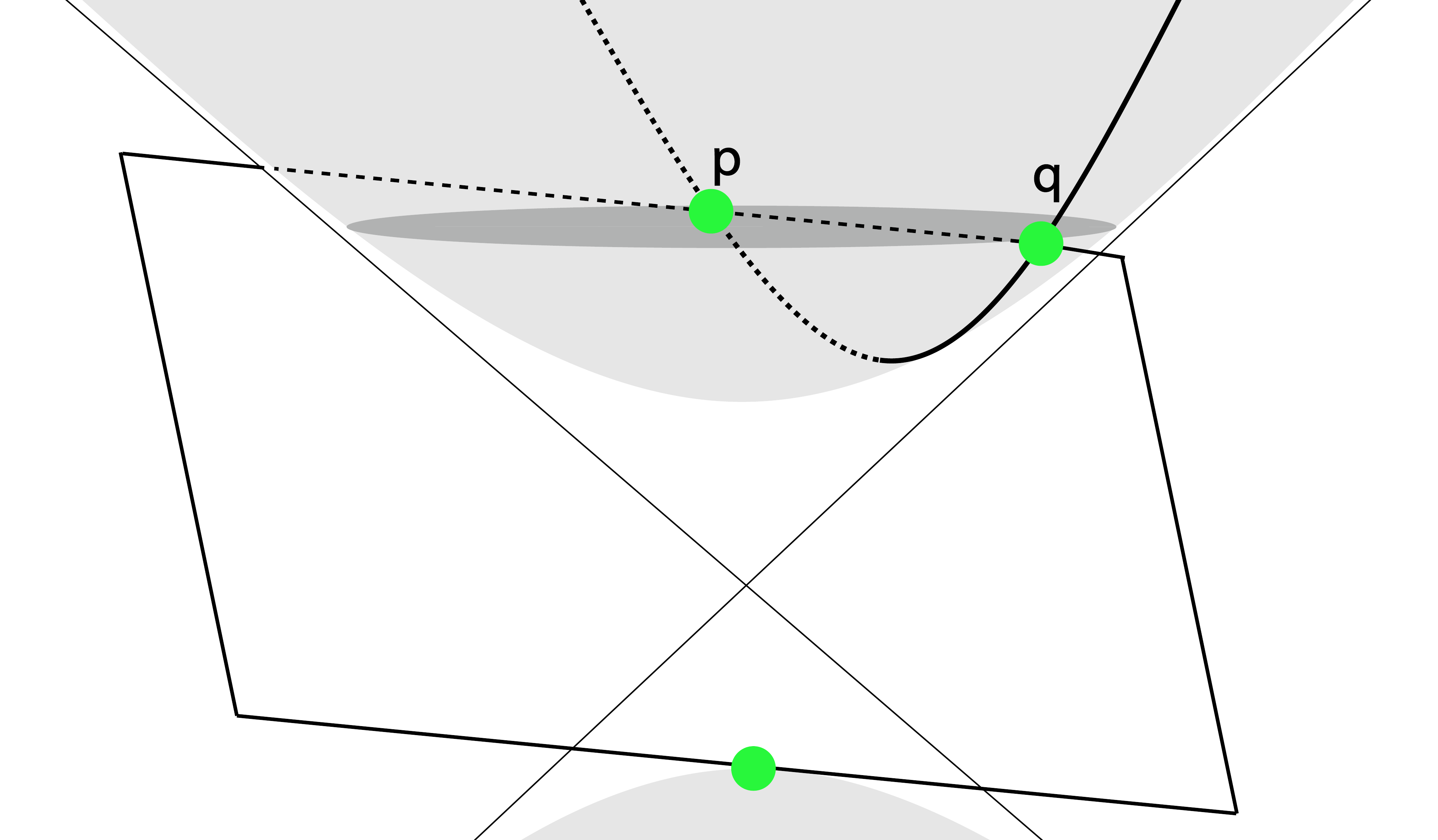} \\
	\end{array}$
	\caption[]{The hyperbolic space can be modeled as the upper-sheet of a double-sheet hyperboloid. The geodesic passing through $p$ and $q$ in the space is the intersection of the hyperboloid with the plane that passes through $p$ and $q$ as well as the vertex of the lower sheet. }
	\label{fig:hyper_hyperbolid}
\end{figure}

\begin{equation}\label{eq:euler}
  \chi(O) = \sum_{i=1}^{N} \frac{1}{2k_i} -\frac{N}{2}+1.
\end{equation}
An orbifold $O$ is spherical, Euclidean, or hyperbolic when $\chi(O)>0$, $\chi(O)=0$, $\chi(O)<0$, respectively.

In the next section, we describe our orbifold design system starting with an enumeration of all two-dimensional kaleidoscopic orbifolds.

\section{Kaleidoscopic Orbifold Enumeration}
\label{sec:enumeration}

While there has been a complete enumeration of spherical and Euclidean orbifolds, to our best knowledge such an enumeration is not explicitly given for hyperbolic orbifolds. In addition, the enumeration for spherical and Euclidean orbifolds is in the form of an exhaustive list. Our orbifold design system is based on the number of walls (the cardinality of the underlying polygon) in the orbifold. Thus, we strive for an {\em explicit} enumeration for all two-dimensional kaleidoscopic orbifolds based on the combination of the polygon cardinality and the universal cover.

There are three types of spherical orbifolds: (1) one mirror, (2) two mirrors, and (3) three mirrors. The only one mirror spherical orbifold is \textbf{$*$}, which corresponds to a room that is half of the sphere with its boundary being the mirror. There are no corners. In this case, one can consider the room as a monogon. In the second case, the room has two mirrors that intersect at $\frac{\pi}{k}$ at both ends where $k>1$. These are diangular orbifolds \textbf{$*kk$}, which are the section of the unit sphere that are between two longitudes that are $\frac{\pi}{k}$ apart. Note that $*11$ is the same as $*$ since the corner angles are $\pi$. In fact, every corner with an angle $\pi$ can be removed from the list of corners. Thus, in the orbifold notation, we require every number to be at least $2$ when there are at least two walls.

For the triangular spherical orbifolds, i.e., three mirrors, there are two sub-types. The first sub-type has the form \textbf{$*22k$} where $k>1$. Figure~\ref{fig:non_Euclidean_room} (a) shows one such orbifold (\textbf{$*222$}). This type of orbifolds can be obtained by taking half of the orbifold \textbf{$*kk$} in the northern hemisphere and adding a mirror on the equator. The second sub-type has the form \textbf{$*23k$} where $k=3, 4, 5$. Notice that when $k=6$ we have \textbf{$*236$}, a Euclidean orbifold. From the discussion, we can see there are more spherical orbifolds than Euclidean orbifolds.

There are {\em bad} orbifolds, namely, \textbf{$*k$} where $k>1$ and \textbf{$*k_1k_2$} where $k_2>k_1>1$. Note that neither type of the bad orbifolds can be realized because it is not physically possible to have one great circle intersecting itself at an angle not equal to $\pi$, nor is it possible to have two different great circles that intersect at different angles where they meet. In our system, we do not construct bad orbifolds.

\begin{table}[tb]
	\scriptsize%
	\centering%
	\begin{tabularx}{\columnwidth}{%
			r%
			l l l l l%
		}
		\toprule
		$N$ & Spherical & Euclidean & Hyperbolic  \\
		\midrule
		1 & \textbf{$*$}  &   &   \\
        \hline
		2 & \textbf{$*22$}, \textbf{$*33$}, \textbf{$*44$}, \dots  &   &     \\
        \hline
		3 & \textbf{$*222$}, \textbf{$*223$}, \textbf{$*224$} \dots   &  &   \\
         \hdashline
		 & \textbf{$*233$}, \textbf{$*234$}, \textbf{$*235$} & \textbf{$*236$} & \textbf{$*237$}, \textbf{$*238$}, \textbf{$*239$}, \dots \\
         \hdashline
		 &   & \textbf{$*244$} & \textbf{$*245$}, \textbf{$*246$}, \textbf{$*247$}, \dots\\
         \hdashline
		 &   &   & \textbf{$*2k_2k_3$} ($k_3\ge k_2 > 4$)\\
         \hdashline
		 &   & \textbf{$*333$} & \textbf{$*334$}, \textbf{$*335$}, \textbf{$*336$}, \dots \\
         \hdashline
		 &   &   & \textbf{$*3k_2k_3$} $(k_3 \ge k_2 >3 $)\\
         \hdashline
		 &   &   & \textbf{$*k_1k_2k_3$} $(k_3 \ge k_2 \ge k_1 >3 $)\\
        \midrule
		4 &   & \textbf{$*2222$}  & \textbf{$*k_1k_2k_3k_4$} ($\max_{1\le i\le 4}k_i>2$)  \\
        \hline
		5 &   &    &  \textbf{$*k_1k_2k_3k_4k_5$}    \\
        \hline
		6 &   &    &  \textbf{$*k_1k_2k_3k_4k_5k_6$}    \\
        \hline
		7 &   &    &  \textbf{$*k_1k_2k_3k_4k_5k_6k_7$}    \\
        \hline
		\vdots  &   &    & \multicolumn{1}{c}{\vdots}  \\
		\bottomrule
	\end{tabularx}%
	\caption{Our enumeration of all two-dimensional kaleidoscopic orbifolds based on the cardinality and the universal cover of the orbifolds. The orbifolds on each row have the same $N$, which is the cardinality of the underlying polygon. }
	\label{tbl:polygon_orbifold}
\end{table}

The rest of polygonal kaleidoscopic orbifolds are hyperbolic, and there are no bad hyperbolic orbifolds. There are three cases: (1) three mirrors, (2) four mirrors, and (3) five or more mirrors. An orbifold is hyperbolic if its has five or more mirrors (e.g. Figure~\ref{fig:non_Euclidean_room} (c)). In addition, all quadrangular orbifolds except \textbf{$*2222$} are hyperbolic (e.g. Figure~\ref{fig:non_Euclidean_room} (b)). Finally, triangular hyperbolic orbifolds include six sub-types: (1) \textbf{$*23k$} where $k>6$, (2) \textbf{$*24k$} where $k>4$, (3) \textbf{$*2k_2k_3$} where $k_3\ge k_2>4$, (4) \textbf{$*33k$} where $k>3$, (5)  \textbf{$*3k_2k_3$} where $k_3\ge k_2 > 3$, and (6) \textbf{$*k_1k_2k_3$} where $k_3 \ge k_2 \ge k_1 > 3$. Notice the three cases, each of which corresponds to a Euclidean orbifold that serves as the border between the set of spherical and the set of hyperbolic orbifolds, namely, \textbf{$*236$} for type (1), \textbf{$*244$} for type (2), and \textbf{$*333$} for type (4).

Our enumeration of all two-dimensional kaleidoscopic orbifolds based on the combination of the cardinality of the underlying polygon and the type of its universal cover is shown in Table~\ref{tbl:polygon_orbifold}. We provide the computations behind our enumeration in Appendix~\ref{sec:computation}.

\section{Interactive Orbifold Scene Design}

Our orbifold visualization system consists of two components: a design panel and the display (Figure~\ref{fig:interactive_system}). We employ the Irrlicht game engine~\cite{irrlicht}, which provides an effective balance between interactivity and functionality.

In the design panel, the user can specify the type of the scene by entering its orbifold notation in the form of a number $N$ for the number of walls in the scene and a list of $N$ numbers, $k_1, k_2, ... , k_N$. Here, $k_i$ indicates that the angle of the $i$-th corner is $\frac{\pi}{k_i}$.

The default value of $N$ is five, and five evenly spaced nodes are displayed on the disk inside the design panel (Figure~\ref{fig:interactive_system}), each of which has a default value of two, i.e. \textbf{$*22222$}. The user can change the value of each node, which can be a non-integer in order to create non-orbifold scenes (Figure~\ref{fig:str_curved_rays} (c)). The user can also change $N$, which results in a room with more or fewer walls. The default value for each node in the new setting is again two. Recall that there are two cases that are not physically realizable: (1) a circular room with a single mirror ($N=1$) that self-intersects at an angle not equal to $\pi$ (\textbf{$*k$} where $k>1$), and (2) a room with two mirrors whose two intersection angles are mutually distinct (\textbf{$*k_1k_2$} where $k_2>k_1$). Thus, we disallow these cases from occurring during the design phase. For example, when $N=1$, the value of the only node is set to one and cannot be changed. Similarly, when $N=2$, if the user changes the value of one node, the value of the other node is automatically updated to match it.

Given the orbifold notation, our system instantaneously generates an empty room (a {\em right} polygonal prism) whose floor and ceiling are congruent to the orbifold and whose walls are the sides of the prism. In our system, it is possible to have multiple mirrors on a wall as shown in Figure~\ref{fig:teaser}. The user can also change the height of the room, the color and attenuation of a mirror, and the textures for the ceiling and the floor. Objects can be added to the scene, whose locations, orientations, sizes, base colors, and material properties (e.g. marble, glass) can be modified from their default values as needed. Light sources can also be added to the scene, with control over their locations, shapes, and optical properties. Unwanted objects and light sources can be removed from the scene.

All of the above scene design operations are interactively rendered in order to support the {\em What-You-See-Is-What-You-Get} (WYSIWYG) paradigm, and all of the examples included in the paper and accompanying video were created using our design system.

At the core of our system is the ability to create a room given any arbitrary two-dimensional kaleidoscopic orbifold and to correctly deform an object in the scene when it moves. In addition, in our design system, mirrors are not explicitly generated. Instead, we emulate the mirror effects by creating copies of the original room which together approximate the universal cover of the orbifold. We provide detail on each of these topics next.

\subsection{Orbifold Scene Construction}
\label{sec:construction}

We first compute the Euler characteristic of the orbifold $O$. If $\chi(O)=0$, i.e. a Euclidean orbifold, it is then \textbf{$*2222$}, \textbf{$*333$}, \textbf{$*236$}, or \textbf{$*244$}. As we know the internal angles and the ratios between the side lengths, we can place the vertices of the underlying triangle or square on the floor. Then, with a user-specified room height, we create the ceiling polygon by duplicating the floor polygon and raising it to match the height. Both the floor and the ceiling are represented by a triangular mesh, so are the rectangular walls in the room.

For a non-Euclidean orbifold, its universal cover is either the sphere or the hyperbolic space. Constructing a 3D room over the sphere and the hyperboloid would require a second sphere or hyperboloid to hold the ceiling. While it is possible to construct the room this way, we instead choose to express the orbifold using a planar model, i.e. the stereographic projection for the sphere~\cite{hitchman2009geometry} and the Poincar\'e disk~\cite{coxeter1998non} for the hyperbolic space. By using these models, we have a unified framework in which any polygon, regardless of the type of its universal cover, can be constructed in the plane. Next, we describe our algorithm to identify the side lengths of any non-Euclidean kaleidoscopic orbifold.

\begin{figure}[t]
	\centering%
	$\begin{array}{@{\hspace{0.0in}}c}
	\includegraphics[width=2.5in]{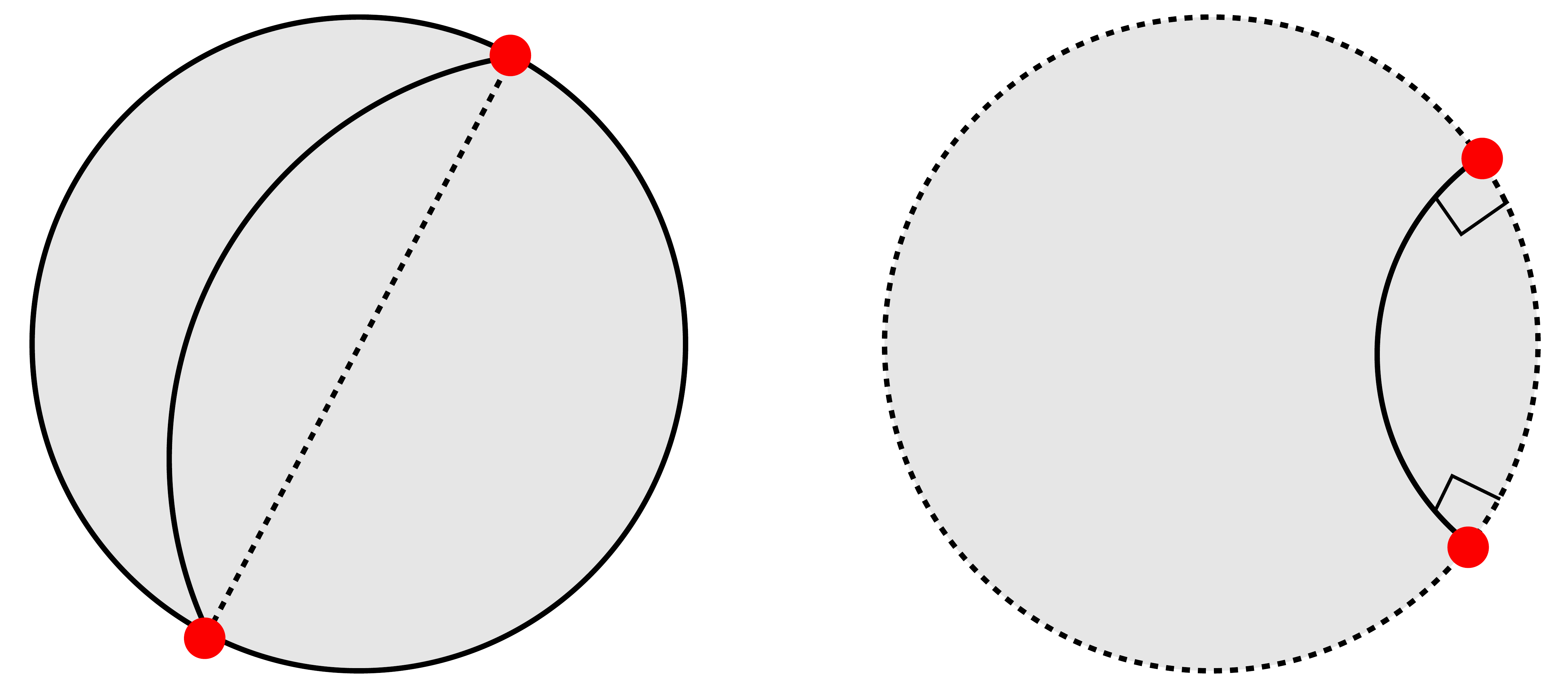} \\
	\end{array}$
	\caption[]{Under the stereographic projection, a geodesic in the sphere is mapped to a circle that intersects the boundary of the unit disk at a pair of antipodal points (left). Using the Poincar\'e disk model, a geodesic in the hyperbolic space is mapped to a circular arc that interests the boundary of the disk at the right angle (right).   }
	\label{fig:geodesics}
\end{figure}

{\bf Spherical Orbifolds:} The stereographic projection~\cite{hitchman2009geometry} maps the unit sphere to the plane $z=0$ such that the equator (a unit circle) is mapped to itself and the north pole is mapped to the origin in the plane. In this case, the northern hemisphere is mapped to the inside of the unit disk bounded by the equator while the southern hemisphere is mapped to the outside of the unit disk. This plane can be identified as the complex plane, i.e. the set of complex numbers. The south pole is mapped to $\infty$. The geodesics are mapped to circles in the plane that intersect the unit circle at a pair of antipodal points (Figure~\ref{fig:geodesics} (left)).

As shown in Table~\ref{tbl:polygon_orbifold}, the underlying polygon of a spherical orbifold is either a monogon, a diangle, or a triangle. In all of these cases, $O$ can be contained inside a hemisphere. That is, under the stereographic projection, it can be contained inside the unit disk in the complex plane. For the monogon, i.e. \textbf{$*$}, the equator is the mirror. For a diangular orbifold \textbf{$*kk$} ($k>1)$, the corner points are on the real axis which are connected by a pair of circle segments that intersect at the corner points at $\frac{\pi}{k}$. Note that the stereographic projection is conformal~\cite{conway2018geometry}, i.e. angle-preserving, thus our choice of the angles at the intersection points. The triangular orbifold \textbf{$*22k$} ($k>1$) is exactly half of the orbifold \textbf{$*kk$} for the same $k$ (Figure~\ref{fig:s2_polygon}: top row).

Finally, the orbifolds \textbf{$*23k$} ($k=3,4,5$) can be constructed by placing the corner points (Figure~\ref{fig:s2_polygon}, bottom row) in the stereographic plane as follows. Let $p_1$, $p_2$, $p_3$ be the corners of the orbifold corresponding to $2$, $3$ and $k$, respectively. The spherical lengths, $d_{i, i+1}$, of the edges $p_i p_{i+1}$ in the polygon are uniquely determined by the angles $\frac{\pi}{k_i}, \frac{\pi}{k_{i+1}}$ and $\frac{\pi}{k_{i+2}}$ of the polygon as follows~\cite{thurston1997three}:

\begin{equation}
d_{i, i+1}=\cos^{-1}\left(\frac{\cos \left(\frac{\pi}{k_{i+2}}\right)+\cos \left(\frac{\pi}{k_{i}}\right) \cos \left(\frac{\pi}{k_{i+1}}\right)}{\sin \left(\frac{\pi}{k_{i}}\right) \sin \left(\frac{\pi}{k_{i+1}}\right)} \right)
\label{eq:dis_s2_edge}
\end{equation}

\noindent where $i = 1, 2, 3$, $d_{3,4}=d_{3,1}$, $k_{4}=k_{1}=2$, $k_{5}=k_{2}=3$, $k_3=k$, and $p_4 = p_1$.
With this information, we first place the vertex $p_1$ at $(1,0)$ in the complex plane. Next, we compute a geodesic emanating from $p_1$, along which we travel for a {\em spherical} distance of $d_{1,2}$ to find $p_2$.
Since $p_1$ and $p_2$ are both represented as complex numbers in the stereographic plane, their spherical distance can be computed from their complex number representations as follows~\cite{hitchman2009geometry}:

\begin{equation}
d(p_1, p_2)=2\tan^{-1}\left(\left|\frac{p_2-p_1}{1+\overline{p_1}p_2}\right|\right)
\label{eq:dis_s2}
\end{equation}

\noindent where $\overline{p}$ is the conjugate of a complex number $p$. Solving for $p_2$ in Equation~\ref{eq:dis_s2} can be challenging given any arbitrary geodesic $\gamma$. However, on the unit circle in the stereographic plane, one can find $p_2$ without the need to solve Equation~\ref{eq:dis_s2}. This is because the unit circle corresponds to the equator in the sphere under the stereographic projection, thus the spherical distance between the two points is the same as the arc distance between them on the unit circle. However, $\gamma$ is not always the unit circle. To address this, we identify a translation in the sphere, which, under the stereographic projection, takes $\gamma$ (Figure~\ref{fig:tr_s2_h2} (left): the arc) to the upper-half of the unit circle. Such a translation
can be modelled by M\"obius transformations~\cite{basmajian2006mobius} that have the following form:

\begin{equation}
  f(z) = e^{i\theta}\frac{z-z_0}{\overline{z_0}z+1}.
\label{eq:mobius}
\end{equation}

\noindent Here, $\theta \in [0, 2\pi)$ and $z_0$ and $z$ are complex numbers. A M\"obius transformation is uniquely determined by three pairs of corresponding points~\cite{hitchman2009geometry}. We first extend $\gamma$ until it intersects the unit disk and map the intersection points to $(-1, 0)$ and $(1, 0)$, respectively (Figure~\ref{fig:tr_s2_h2} (left)). We select the third point to be middle point of $\gamma$, which is mapped to $(0, 1)$. Call this M\"obius transformation $M$. Since M\"obius transformations in the stereographic plane correspond to rotating the sphere before the stereographic projection, spherical distance is preserved $M$. Thus, we can find the point $M(p_2)$ on the unit circle which has an arc distance of $d_{1,2}$ from $M(p_1)$. Finally, performing the inverse of $M$, which is also a M\"obius transformation to $M(p_2)$, we can find $p_2=M^{-1}(M(p_2))$.

\begin{figure}[t]
	\centering%
	$\begin{array}{@{\hspace{0.0in}}c}
	\includegraphics[width=3.3in]{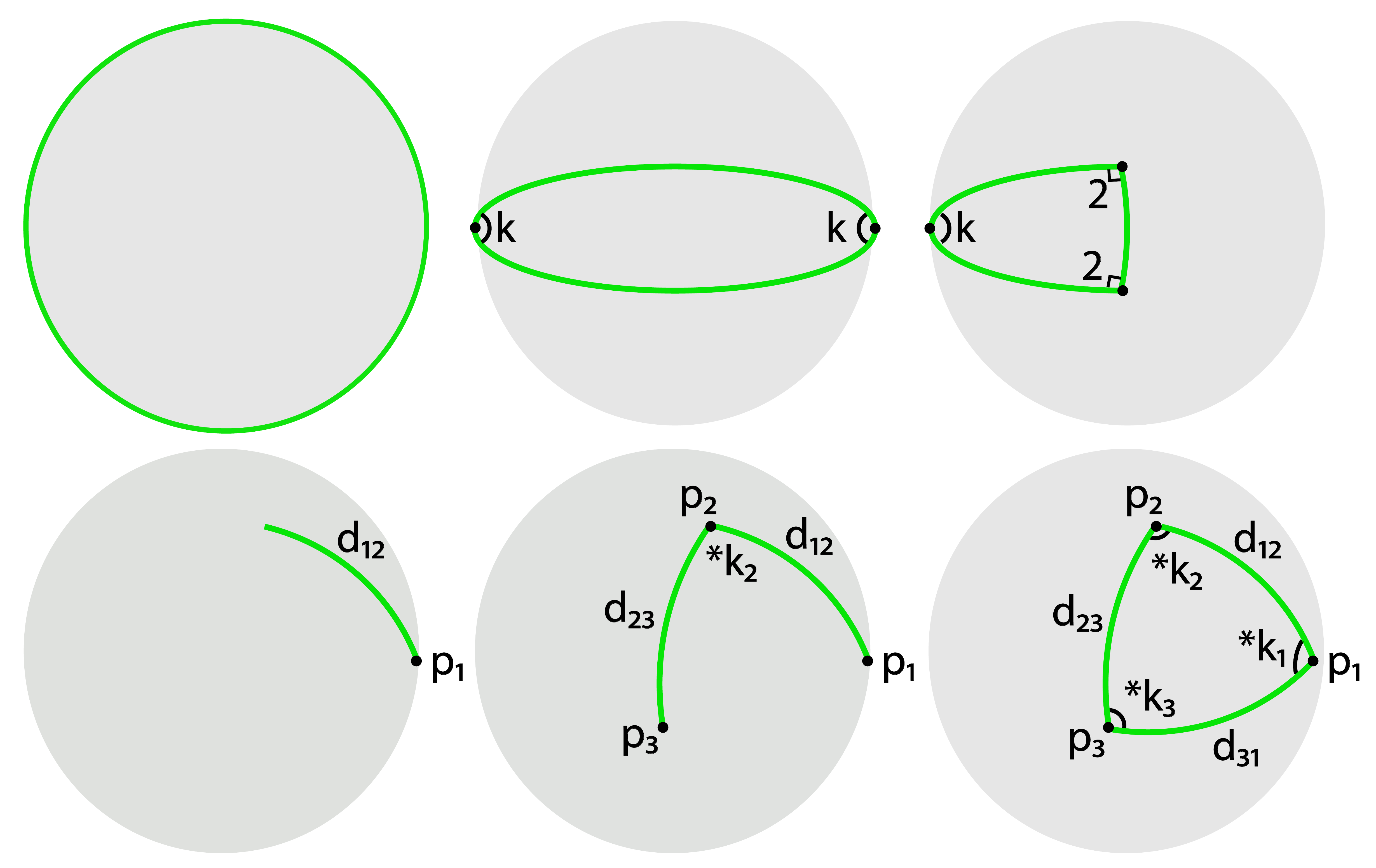} \\
	\end{array}$
	\caption[]{In the stereographic plane, the \textbf{$*$} is realized as the unit disk (top row: left). A \textbf{$*kk$} orbifold can be created by placing its corner points at a pair of antipodal points on the unit disk (top row: middle). The \textbf{$*22k$} type of orbifold is half of the \textbf{$*kk$} orbifold (top row: right). The \textbf{$*23k$} ($k=3, 4, 5$) can be constructed as shown in the bottom row.}
	\label{fig:s2_polygon}
\end{figure}

From $p_2$, we compute a second geodesic which has an angle of $\frac{\pi}{k_2}$ with the first geodesic. Then, travelling along the new geodesic for a {\em spherical} distance of $d_{2,3}$, we can locate $p_3$. Thus, we have constructed the \textbf{$*23k$} type of orbifolds (Figure~\ref{fig:s2_polygon}: bottom row).

\begin{figure}[h]
	\centering%
	$\begin{array}{@{\hspace{0.0in}}c}
	\includegraphics[width=2.7in]{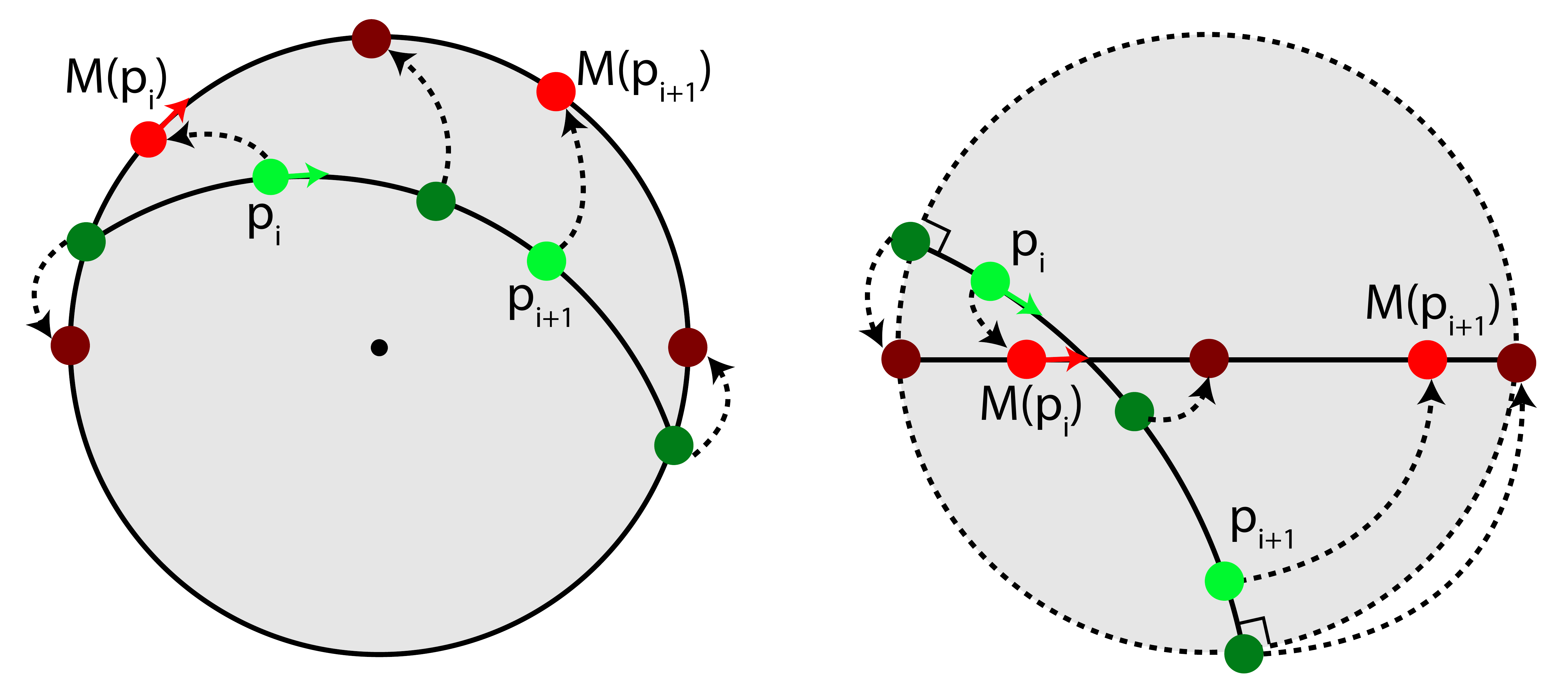} \\
	\end{array}$
	\caption[]{Given a point $p_i$, we simplify the computation of $p_{i+1}$ by performing a M\"obius transformation in the stereographic plane for spherical orbifolds (left) and the Poincar\'e disk for hyperbolic orbifolds (right). In both cases, the unique M\"obius transformation $M$ maps the extended geodesic (including the intersections with the unit disk) to the upper semi-circle for spherical orbifolds and line segment between $(-1, 0)$ and $(1, 0)$ for hyperbolic orbifolds. In addition, the center of extended geodesic is mapped to $(0, 1)$ in the spherical case (left) and $(0, 0)$ in the hyperbolic case (right). Then we identify $M(p_{i+1})$ from which we can recover $p_{i+1}$ using the inverse of $M$.
	}
	\label{fig:tr_s2_h2}
\end{figure}


{\bf Hyperbolic Orbifolds:} We now consider the hyperbolic case, where $\chi(O)<0$. The hyperbolic space can be modelled by the Poincar\'e disk~\cite{coxeter1998non}, which is the interior of the unit disk in the plane. Under this model, the geodesics of the hyperbolic space are circles that intersect the boundary of unit disk at the right angle (Figure~\ref{fig:geodesics} (right)). Recall that a hyperbolic orbifold is a polygon with three or more sides (Table~\ref{tbl:polygon_orbifold}). To create such a polygon, we follow the same approach for spherical orbifolds. That is, we start with the location of $p_1$ in the Poincar\'e disk and a geodesic emanating from $p_1$. We then travel along this geodesic for a prescribed distance to locate $p_2$. From there, we identify a new geodesic whose angle with the original geodesic is $\frac{\pi}{k_2}$, which we follow to identify $p_3$. This process terminates once when we have identified $p_{N}$. The main difference lies in the fact that the hyperbolic distance between two points in the Poincar\'e disk is different from the spherical distance of the same two points in the stereographic plane.

Given a triangular hyperbolic orbifold \textbf{$*k_1 k_2 k_3$}, the length of the edge between $p_i$ and $p_{i+1}$ ($i = 1, 2, 3$, $d_{3,4}=d_{3,1}$, $k_{4}=k_{1}$, $k_{5}=k_{2}$, and $p_4 = p_1$) is given by~\cite{thurston1997three}
\begin{equation}
d_{i,i+1} = \cosh^{-1} \left( \frac{ \cos \left(\frac{\pi}{k_{i+2}}\right) + \cos \left(\frac{\pi}{k_i}\right) \cos \left(\frac{\pi}{k_{i+1}}\right) }{
	\sin \left(\frac{\pi}{k_i}\right) \sin \left(\frac{\pi}{k_{i+1}}\right) } \right)\\
\label{eq:h2_tri1}
\end{equation}

In addition, when represented as complex numbers in the plane containing the Poincar\'e disk, the hyperbolic distance between $p_i$ and $p_{i+1}$ is given by~\cite{hitchman2009geometry}:

\begin{equation}
d(p_i, p_{i+1})=\ln \left(\frac{|1-\overline{p_i} p_{i+1}|+|p_{i+1}-p_i|}{|1-\overline{p_i} p_{i+1}|-|p_{i+1}-p_i|}\right)
\label{eq:dis_h2}
\end{equation}

Similar to the case of spherical orbifolds, finding $p_{i+1}$ from $p_i$ on an arbitrary geodesic $\gamma$ in the Poincar\'e disk requires solving Equation~\ref{eq:dis_h2} for $p_{i+1}$ which can be challenging. To simplify the matter, we identify the hyperbolic translation in the Poincar\'e disk that takes $\gamma$ (Figure~\ref{fig:tr_s2_h2} (right): the arc) to the line segment from $(-1, 0)$ to $(1, 0)$ in the Poicnar\'e disk (Figure~\ref{fig:tr_s2_h2} (right)). This translation  takes the intersection with the unit circle (two dark green points on the unit circle) to $(1, 0)$ and $(-1, 0)$ (the rightmost and the leftmost burgundy points). It also takes the middle points on $\gamma$ (the middle dark green point) to $(0, 0)$ (the middle burgundy point). Such a translation can be modelled by a M\"obius transformation of the following form:

\begin{equation}
 f(z) = e^{i\theta}\frac{z-z_0}{1-\overline{z_0}z}
\label{eq:mobius_hyperbolic}
\end{equation}

\noindent where $\theta \in [0, 2\pi)$ and $|z_0|<1$. Call the translation $M$. Since translations maintain hyperbolic distances, the hyperbolic distance between $p_i$ and $p_{i+1}$ is the same as the distance between $M(p_i)$ and $M(p_{i+1})$. However, since $M(p_i)$ and $M(p_{i+1})$ are on the real axis, it is easier to solve Equation~\ref{eq:dis_h2}.  Once we have found $M(p_{i+1})$, we can recover $p_{i+1}$ by applying $M^{-1}$.

However, deciding the side lengths of a hyperbolic polygon with at least four edges is more challenging as there are no published formulas to the best of our knowledge. To address this, we compute the side lengths based on two facts~\cite{Milley:98}: (1) any quadrangular hyperbolic polygon can be decomposed into the disjoint union of at most two quads with two right angles (\textbf{$*22k_3k_4$}), and (2) any hyperbolic polygon with at least five sides can be decomposed into the disjoint union of a finite number of quads of the type \textbf{$*22k_3k_4$} and pentagons of the type \textbf{$*2222k_5$}, i.e. four right angles. Examples of the two facts are shown in Figure~\ref{fig:hyper_decomp}.

\begin{figure}[t]
	\centering%
	$\begin{array}{@{\hspace{0.0in}}c}
	\includegraphics[width=3.0in]{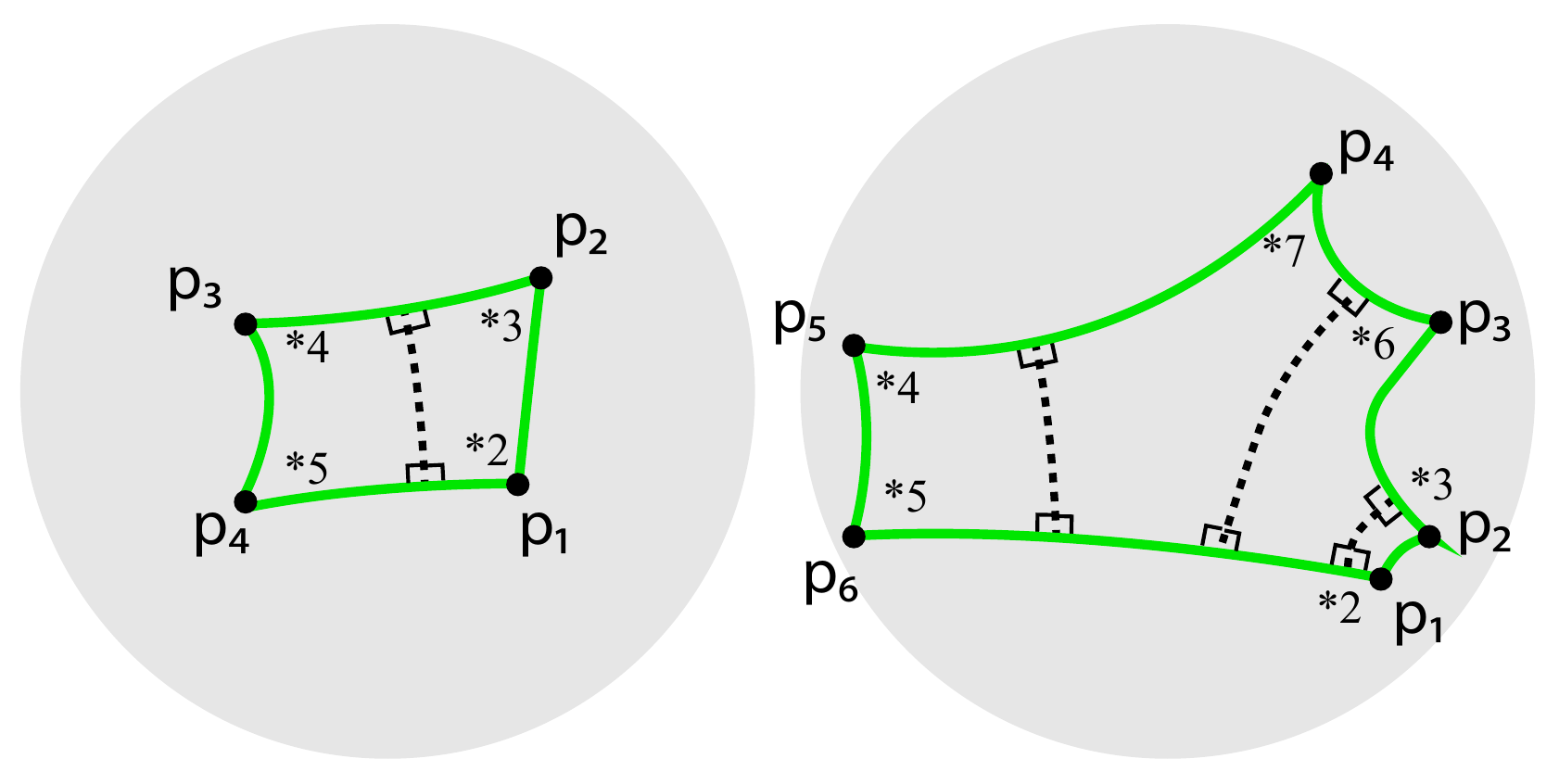} \\
	\end{array}$
	\caption[]{Two example scenarios of our decomposition algorithm for hyperbolic orbifolds: (left) a quad is divided into two \textbf{$*22k_3k_4$} type quads, and (right) a hexagon is divided into two \textbf{$*22k_3k_4$} quads and two \textbf{$*2222k_5$} pentagons.  }
	\label{fig:hyper_decomp}
\end{figure}

The side lengths of the \textbf{$*22k_3k_4$} polygon (Figure~\ref{fig:right_quad_pentagon} (left)) are given by~\cite{thurston1997three, Milley:98}:

\begin{equation}
\begin{aligned}
& d_{2,3} = \sinh^{-1}{\left(\frac{\cos\left({\frac{\pi}{k_4}}\right) + \cos\left({\frac{\pi}{k_3}}\right)\cosh\left({d_{1,2}}\right)}{\sin\left({\frac{\pi}{k_3}}\right) \sinh\left({d_{1,2}}\right)} \right)}  \\
& d_{3,4} = \cosh^{-1}{\left(\frac{\cosh\left({d_{1,2}}\right) + \cos\left({\frac{\pi}{k_3}}\right)\cos\left({\frac{\pi}{k_4}}\right)}{\sin\left({\frac{\pi}{k_3}}\right) \sin\left({\frac{\pi}{k_4}}\right)} \right)} \\
& d_{4,1} = \sinh^{-1}{\left(\frac{\cos\left({\frac{\pi}{k_3}}\right) + \cos\left({\frac{\pi}{k_4}}\right)\cosh\left({d_{1,2}}\right)}{\sin\left({\frac{\pi}{k_4}}\right) \sinh\left({d_{1,2}}\right)} \right)}
\end{aligned}
\label{eq:hyper_quad}
\end{equation}

\noindent where $d_{1,2}$, the length of the cut edge in the decomposition, is a free variable. This is similar to the case where the width and length of the \textbf{$*2222$} orbifold (a rectangle) are free variables.

A generic quadrangular orbifold (\textbf{$*k_1 k_2 k_3 k_4$}) can be decomposed into the disjoint union of two quads (\textbf{$*22k_1k_2$}) and (\textbf{$*22k_3k_4$}) (Figure~\ref{fig:hyper_decomp}, left). This allows us to compute side lengths of the two special quads, which, when combined, give rise to the side lengths of the generic quad.

The side lengths of the \textbf{$*2222k_5$} polygon (Figure~\ref{fig:right_quad_pentagon} (right)) can computed as follows:~\cite{thurston1997three, Milley:98}:

\begin{equation}
\begin{aligned}
& d_{1,2} = \cosh^{-1}{\left(\frac{\cos\left({\frac{\pi}{k_5}}\right) + \cosh\left({d_{2,3}}\right)\cosh\left({d_{5,1}}\right)}{\sinh\left({d_{2,3}}\right) \sinh\left({d_{5,1}}\right)}\right)}  \\
& d_{4,5} = \sinh^{-1}{\left(\frac{\cosh\left({d_{2,3}}\right) + \cos\left({\frac{\pi}{k_5}}\right)\cosh\left({d_{5,1}}\right)}{\sinh\left({d_{5,1}}\right) \sin\left({\frac{\pi}{k_5}}\right)}\right)} \\
& d_{3,4} = \sinh^{-1}{\left(\frac{\cosh\left({d_{5,1}}\right) + \cos\left({\frac{\pi}{k_5}}\right)\cosh\left({d_{2,3}}\right)}{\sinh\left({d_{2,3}}\right) \sin\left({\frac{\pi}{k_5}}\right)}\right)}
\end{aligned}
\label{eq:hyper_penta}
\end{equation}

\noindent where $d_{2,3}$ and $d_{5,1}$, the cut edges in the decomposition, are free variables.

Generic pentagons and higher-order $N$-sided polygons can be decomposed as the union of $N-4$ pentagons of the type \textbf{$*2222k_5$} with up to two additional quads of the type \textbf{$*22k_1k_2$}. To do so, we first consider the simplest case where there exist $k_i=k_{i+1}=2$ and $k_j=k_{j+1}=2$ where $i, i+1, j, j+1$ are mutually distinct. We can find a geodesic that intersects $p_{i-1}p_i$ and $p_{i+1}p_{i+2}$ at the right angle. This geodesic removes from the original polygon a pentagon involving $p_i$, $p_{i+1}$, and $p_{i+2}$ which has four right internal angles, thus the type \textbf{$*2222k_5$}. The remaining polygon has one fewer vertex and still has four internal right angles (Figure~\ref{fig:hyper_decomp}). Repeating this process can lead to $N-4$ pentagons of the type \textbf{$*2222k_5$}. On the other hand, a generic polygon with at least five edges can be reduced into the simpler setting by removing up to two quads of the type \textbf{$*22k_3k_4$}. We can then compute the side lengths of each of the sub-polygons, which, when combined, give the side lengths of the original polygon.

\begin{figure}[t]
	\centering%
	$\begin{array}{@{\hspace{0.0in}}c}
	\includegraphics[width=3.0in]{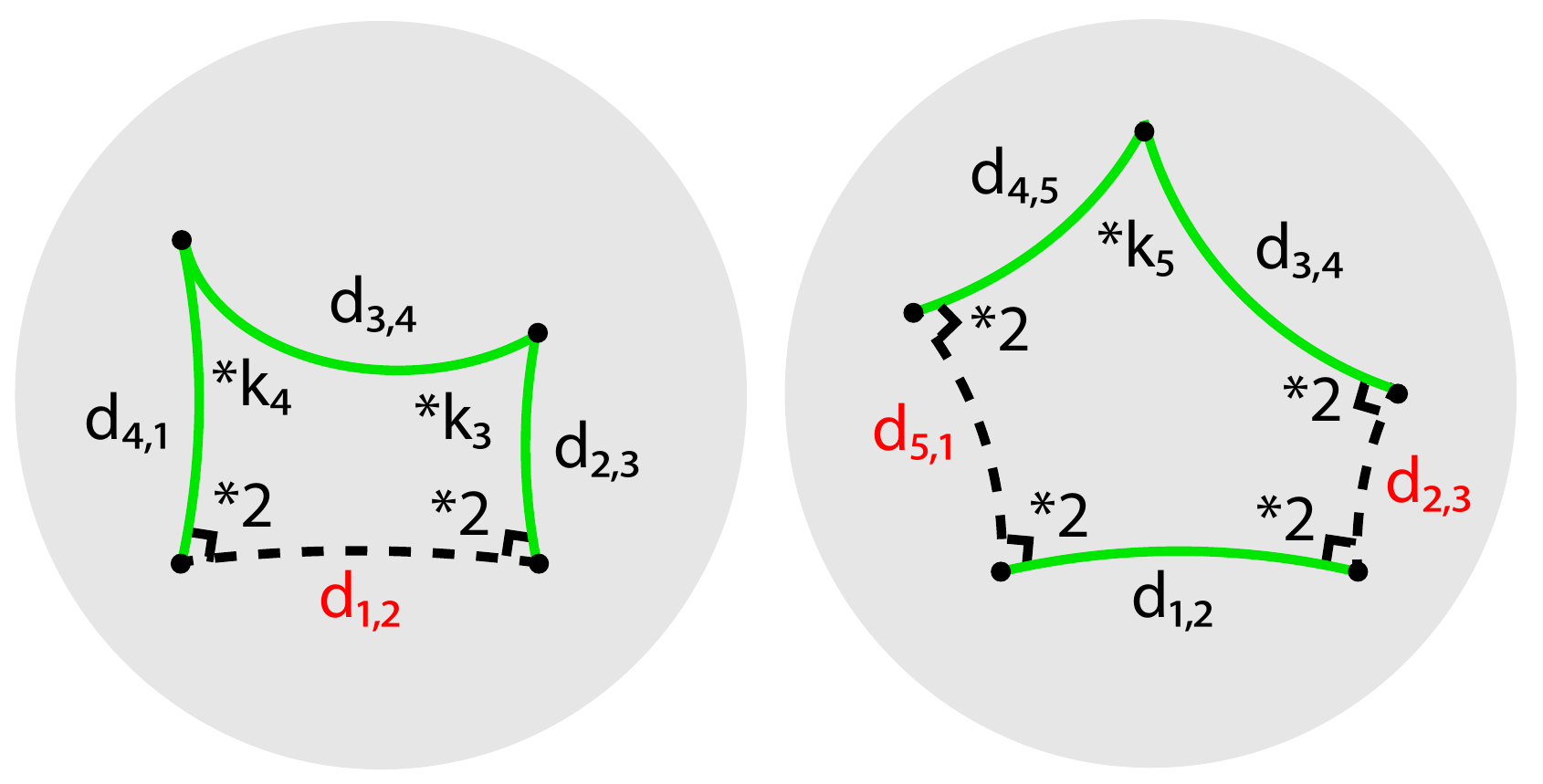} \\
	\end{array}$
	\caption[]{The special quad \textbf{$*22k_3k_4$} (left) and the special pentagon \textbf{$*2222k_5$} (right) are the building blocks of our decomposition algorithm. The free variables are colored in red, which correspond to the edges (dashed) introduced during the decomposition.     }
	\label{fig:right_quad_pentagon}
\end{figure}

Every cut edge in the decomposition gives rise a free variable, which can be modified by the user. The default value for the free variables is set to $1.4$.

With our algorithm, any spherical and hyperbolic orbifold can be constructed given its orbifold notation. Figure~\ref{fig:2d_pattern} show two example orbifolds with their universal covers.

\subsection{Object Embedding and Movement}

Once the orbifold has been realized as a polygonal prism, the user can add objects to the scene.

However, when bringing an object, which is created in a presumably Euclidean space, to a non-Euclidean space, a natural question to ask is how to perform the embedding. Due to the difference in their respective distance metrics, it is not always possible to embed the model in such a way that the length of every edge in the mesh is maintained. To address this challenge, when embedding the object into the scene, we first place it so that its center of mass is at the origin of the plane for both the stereographic plane and the Poincar\'e disk. The coordinates of the object are now considered their corresponding coordinates in the stereographic plane and the Poincar\'e disk. Then, the embedded mesh is translated to the user-specified initial location with the translation native to the non-Euclidean space.

Translations in both the spherical and hyperbolic spaces are isometries. In the spherical space, translations can be modelled in the stereographic plane by M\"obius transformations according to Equation~\ref{eq:mobius}. Interestingly, translations in the hyperbolic space using the Poincar\'e disk can also be modelled by M\"obius transformations according to Equation~\ref{eq:mobius_hyperbolic}.

We store the M\"obius transformation of each object, and update it when the model is interactively moved inside the room. The M\"obius transformation, which corresponds to a translation of the spherical space or the hyperbolic space, is applied to all the vertices in the mesh.

\begin{figure}[t]
	\centering%
	$\begin{array}{c@{\hspace{0.1in}}c}
	\includegraphics[width=1.6in]{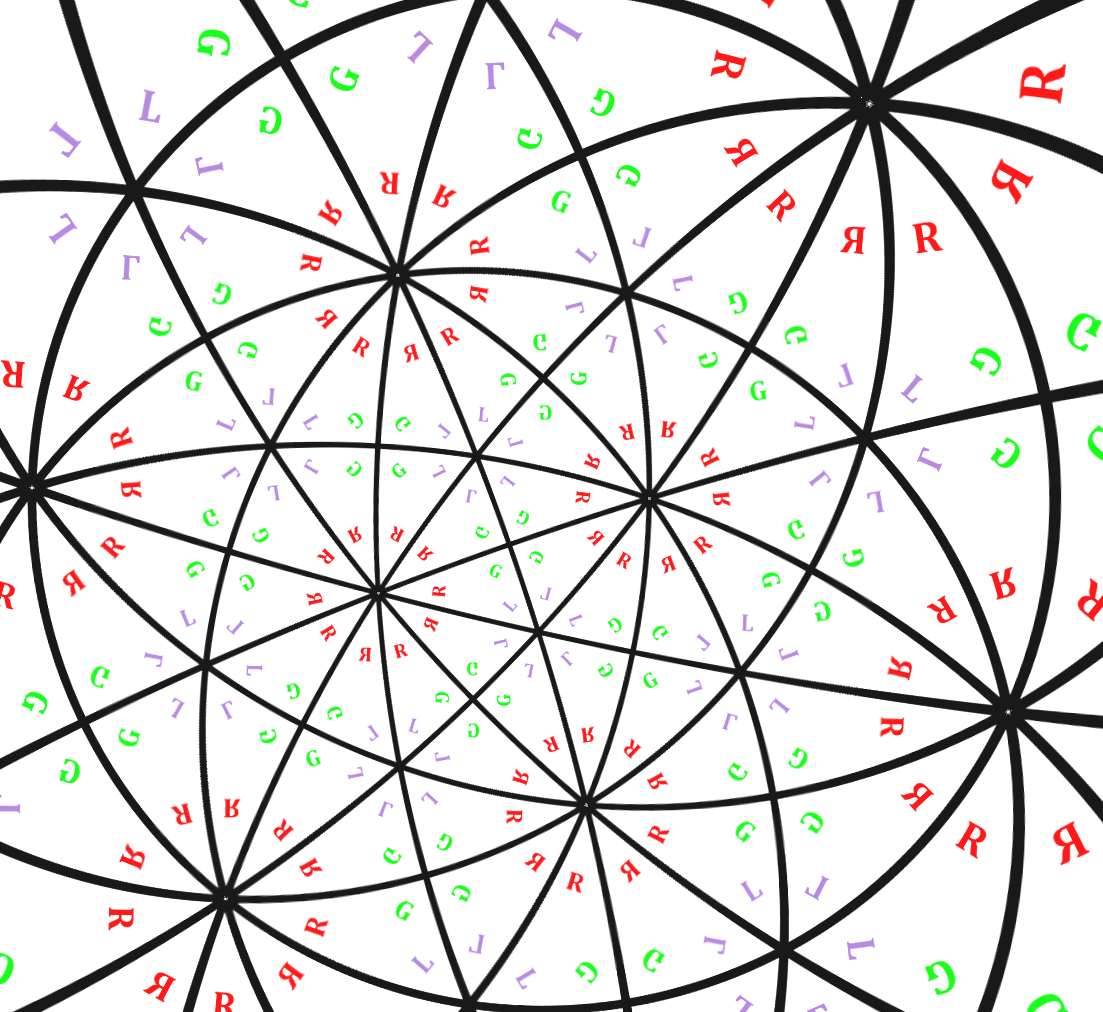}
	&\includegraphics[width=1.6in]{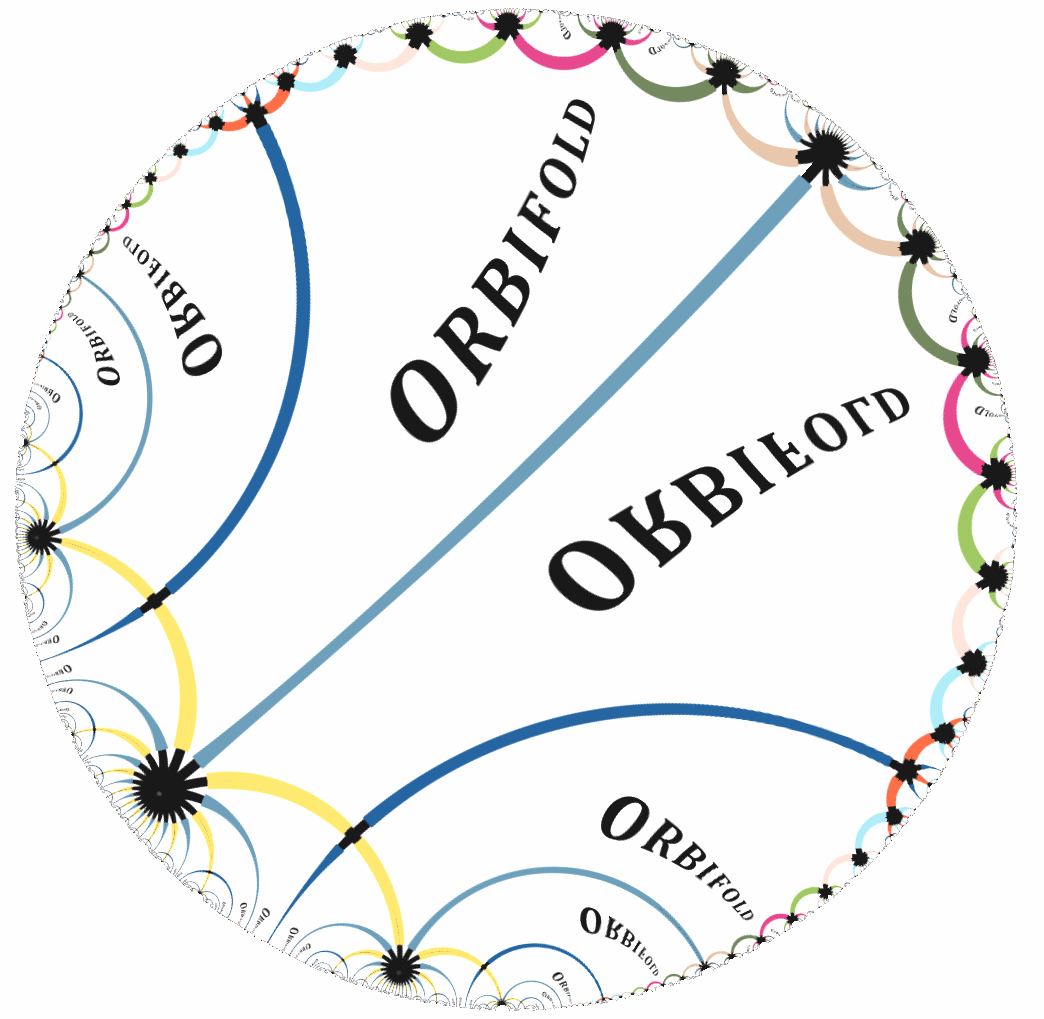}
	\\
	(a) \quad \textbf{$*235$}
	& (b) \quad \textbf{$*23456789(10)(11)$}\\
	\end{array}$
	\caption[]{Two orbifolds generated using our orbifold layout creation algorithm: (a) the $*235$ orbifold (spherical) and (b) the orbifold $*23456789(10)(11)$ (hyperbolic). }
	\label{fig:2d_pattern}
	\end{figure}

\subsection{Universal Cover Construction}

Being able to see the universal cover, i.e. all mirror reflections, can be important for a user while exploring our tool. However, capturing mirror reflections can be computationally expensive with high-quality renderers such as Mitsuba~\cite{Mitsuba}. Williams and Zhang~\cite{Williams:16} address this for Euclidean kaleidoscopic orbifolds by creating a finite number of copies of the reflections of the original room that approximate the universal cover. This is based on two observations. First, as copies are farther away from the room, their images perceived by the viewer approach the vanishing line and thus do not contribute much to the pixels. Second, the color intensity of faraway copies diminishes as the number of bounces off from the mirrors increases. We employ a similar approach, which focuses on non-Euclidean orbifolds. Two examples of universal covers created by our method are shown in Figure~\ref{fig:2d_pattern}: (a) the spherical orbifold \textbf{$*235$} and (b) the hyperbolic orbifold \textbf{$*23456789(10)(11)$}.

In our system, the construction of the universal cover for Euclidean orbifolds closely follows that of Williams and Zhang~\cite{Williams:16}, which computes the translational cover of the orbifold and generates additional copies of the translational cover using either the Gaussian integer lattice $\mathbb{Z}[\textit{i}]$ for the \textbf{$*2222$} and \textbf{$*442$} cases and the Eisenstein integer lattice $\mathbb{Z}[\omega]$ for the \textbf{$*333$} and \textbf{$*632$} cases.

For spherical and hyperbolic orbifolds, the notion of translational cover is not well-defined. Consequently, we employ the following process. Starting from the original room, we iteratively add a virtual copy by reflecting the room across one of its mirrors. It is also possible to reflect a virtual room across its mirror. To avoid duplicates, i.e. a virtual room that is discovered through two different paths from the original room, we compare the center of a potentially new room to the centers of already visited room and virtual rooms~\cite{zeller2021tegula}.

To locate the corners of each newly added room, we apply reflections in the spherical or hyperbolic spaces to the already visited room involved in the reflection. Inside the stereographic plane, a reflection in the spherical space can be represented as the compositions of some M\"obius transformations (Equation~\ref{eq:mobius}) and the conjugation function with respect to the real axis $f(z)=\overline{z}$. Similarly, inside the Poincar\'e disk, a reflection in the hyperbolic space can also be expressed as the composition of some M\"obius transformation (Equation~\ref{eq:mobius_hyperbolic}) and the conjugation function with respect to the real axis. The M\"obius transformation is stored in the form of a $2\times 2$ matrix whose entries consist of $a$, $b$, $c$, and $d$~\cite{anderson2006hyperbolic}.

For each copy of the room we save the transformation needed to take the original room to the copy, which is a combination of a M\"obius transformation and up to one conjugation function. In addition, for each object in the original room, we also save its M\"obius transformation. Then, the position and orientation of an object in a virtual copy of the original scene can be calculated by combining the matrix for the virtual room and the matrix of the same object in the original scene. Also, the computation for both object manipulation and universal cover construction is done using the shader.

\section{Optics-Based Orbifold Visualization}
\label{sec:vis_results}

Our mirror-based visual metaphor and interactive orbifold design algorithm are motivated by a number of tasks for understanding the concept of orbifolds. Additional results illustrating concepts and properties related to orbifolds can be found in Appendix~\ref{sec:more_visualization_results}. Below we list four tasks and show the results of using our approach to address them.

{\bf Recognizing Non-Orbifolds}: One of the most fundamental tasks for orbifold visualization is to decide whether a given mirror scene is an orbifold. Non-orbifolds can be difficult to conceptualize, especially when one or more angles at the corners violates $\frac{\pi}{k}$ for $k\in \mathbb{N}^+$. The resulting rendered images and/or videos can deliver a sense of whether the room is a kaleidoscopic orbifold. When the room is a kaleidoscopic orbifold, the viewer should be able to move around the scene and see a consistent larger scene (the universal cover) that is seamlessly tiled by copies of the original room (the orbifold). On the other hand, if the viewer sees conflicting results, such as the double-headed dragon in Figure~\ref{fig:str_curved_rays} (c), it is an indication of a non-orbifold. In this particular example, the room has a corner with an angle of $\frac{\pi}{k}$ where $k=\frac{2}{3}$, which is not an integer.

\begin{figure*}[t]
	\centering%
	$\begin{array}{@{\hspace{0.0in}}c@{\hspace{0.1in}}c@{\hspace{0.1in}}c}
	\includegraphics[width=2.3in]{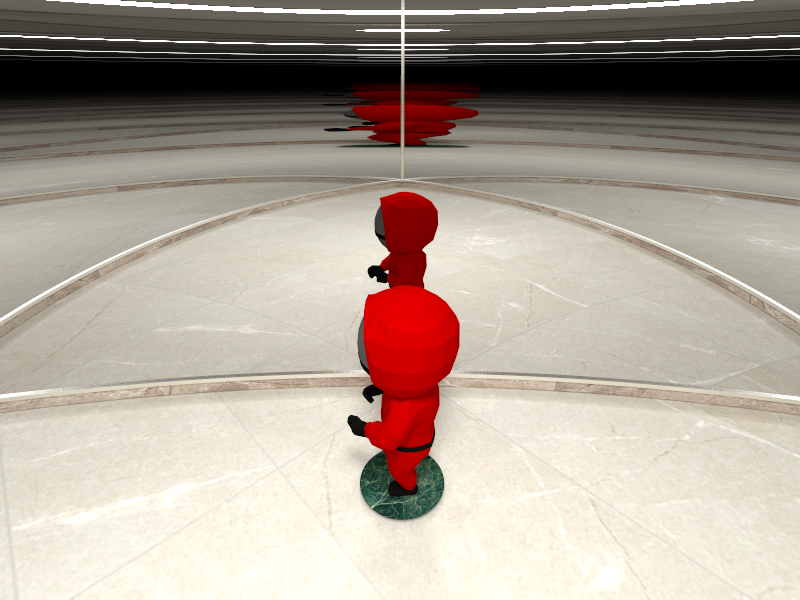}
	&\includegraphics[width=2.3in]{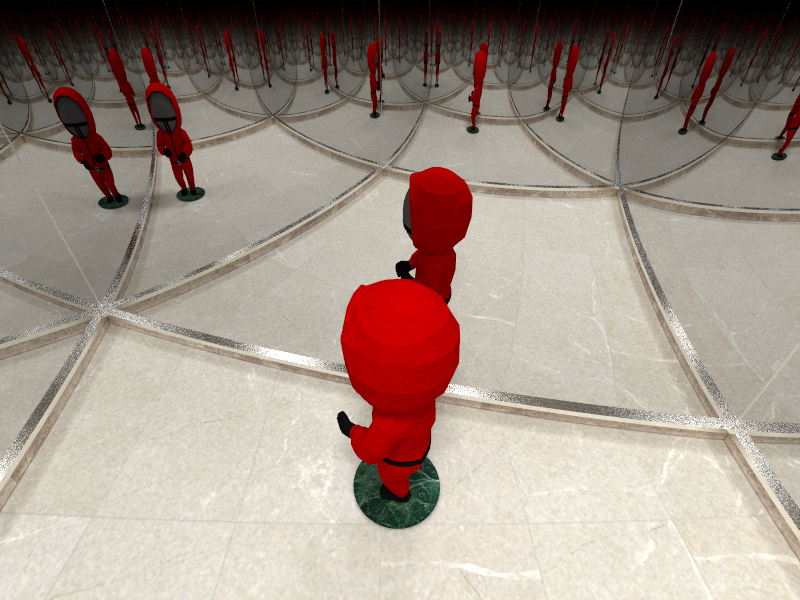}
	&\includegraphics[width=2.3in]{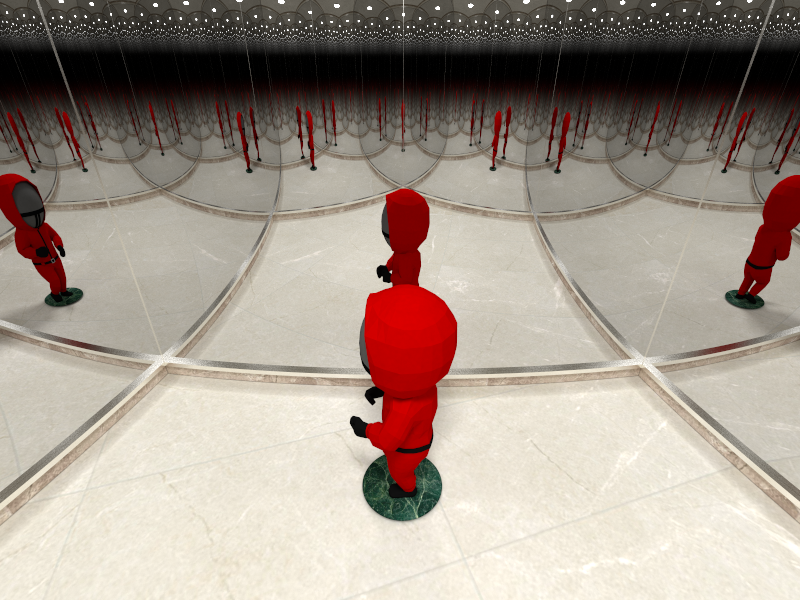} \\
	(a) \quad \textbf{*222} & (b) \quad \textbf{*2323} & (c) \quad \textbf{*22222}  \\
	\end{array}$
	\caption[]{An avatar walks along the walls of the orbifolds. The universal cover is the sphere for (a) and the hyperbolic space for (b) and (c).}
	\label{fig:non_Euclidean_room}
\end{figure*}

{\bf Sensing the Orbifold}: An orbifold is a topological space that stands on its own. Using our mirrored scenes, we can produce visualizations that emphasize different aspects of an orbifold. For example, for the scene shown in Figure~\ref{fig:emphasis}, we can adjust the attenuation factors for the mirrors to emphasize the orbifold itself (Figure~\ref{fig:emphasis} (a)) by setting high attenuation for both mirrors, or the translational cover (Figure~\ref{fig:emphasis} (b)) by setting high attenuation for one mirror and low attenuation for the other mirror, or the universal cover (Figure~\ref{fig:emphasis} (c)) by setting low attenuation for both mirrors. Notice that in all of these cases, our mirror-based visualization provides a clear sense of the orbifold, the room.

{\bf Determining the Orbifold Notation}: Once an intuitive sense of the orbifold is established and the structure of the reflection is observed, it becomes a relatively straightforward task to determine the type of the orbifold in terms of its orbifold notation.
Given the power of 3D graphics and animation, we produce either a panorama of the scene or an animation in which an avatar walks along the walls (see the accompanying video). As the avatar approaches the $i$-th corner, so do its nearest reflections. There appear to be $k_i$ pairs of avatars approaching the corner. Then, $k_i$ pairs of the same avatars leave the corner for the next one. This helps the user identify $k_i$ for the $i$-th corner. By walking around the room one time, each corner is visited. This can help the user write down the orbifold notation.

{\bf Identifying the Type of the Universal Cover}: With the same panorama and walkaround animations, the user can gain insight into the type of the universal cover. Both the spherical space and the hyperbolic space can bring unfamiliar experience to someone new to the concept. For example, inside the spherical space, objects do not always appear smaller when they are further away. Specifically, when the viewer is at the south pole, objects near the north pole appear much wider than the same object at the equator (Figure~\ref{fig:non_Euclidean_room} (a)). On the other hand, objects farther away always look smaller in the hyperbolic space. However, with the seemingly same distance to the viewers, an object can look much smaller in the hyperbolic space than in the Euclidean space (Figure~\ref{fig:non_Euclidean_room} (b-c)). As the objects move around the scene such as the avatar in Figure~\ref{fig:non_Euclidean_room} and the accompanying video, the way the reflections of the objects deform in non-Euclidean spaces is rather different from that in the Euclidean space. For example, a reflection of the avatar may suddenly grow much bigger and then shrink quickly in the spherical space. With a relatively simple scene (the orbifold), the user can gain insight into the entire universal cover through mirror reflections.

\section{Rendering}
\label{sec:curved_rays}

Once the scene has been constructed, we can either display it interactively during the design stage (Figure~\ref{fig:interactive_system}) or send it to Mitsuba~\cite{Mitsuba} for high-quality off-line rendering. For the latter, we make use of a ray-tracing type of approach with Mitsuba. Recall that our non-Euclidean room is modelled by a subset in $\mathbb{D} \times [0, h]$ where $\mathbb{D}$ is the unit disk and $h$ is the height of the room. Thus, any geodesic $\gamma$ in the space satisfies that its horizontal projection is a geodesic in the spherical or hyperbolic space. Consequently, $\gamma$ takes the form of a 3D spiral. To generate the correct rendering of such scenes, we modify the ray-triangle intersection algorithm in Mitsuba from line-plane intersection to spiral-plane intersection.

Given a point $p_0$ in the room and a tangent direction $v$, the spiral emanating from $p_0$ in the direction of $v$ is the tensor product of horizontal projection and the vertical projection of the spiral. The horizontal projection is a circle (can also be a line) in the floor that starts with the projection of $p_0$ onto the floor and travels in the direction of $v_H$, the projection of $v$ onto the floor. The vertical projection of the spiral is a line that starts at $p$ that travels in the direction of $v_Z$, the $z$-component of $v$. This leads to a parametric form of $\gamma$ which we use to find the intersection of the spiral $\gamma$ with a given triangle on an object in the scene. When there are multiple intersections with the triangle, we find the one that has the smallest positive $t$ value, which represents the closest intersection from the reference point $p$.

During the design stage, we make use of scan conversion for interactive feedback. Recall that in this case, we also render the universal cover since we do not model mirror reflections during design. For each vertex on an object in the scene or a virtual copy, we project it to the image plane by using the same spiral-triangle intersection as in the case of Mitsuba rendering. We then perform barycentric interpolation to find the footprint of any triangle in the image plane by interpolating their vertices' locations. Given enough mesh resolution, the error in this interpolation-based approach is relatively small since all of its vertices are projected correctly using the spiral-triangle intersection.

\section{Performance}

Our interactive design system is evaluated on  a computer featuring an i7-8700K @3.70GHz CPU and an NVIDIA GeForce GTX 2080 GPU. For an orbifold $*22222$ with 26,022 primitives at an $800 \times 600$ resolution, our system can render its universal cover approximated by $166$ copies of the original room at $11$ frames per second. When the number of copies is reduced to $61$, the frame rate is increased to $30$ frames per second.

For off-line rendering, we make modifications to Mitsuba $0.6$. The rendering was performed on a Linux Cluster, comprised of machines equipped with $12$-core $2.67$ GHz processors and $48$ GB SDRAM, to generate high-resolution images and videos based on the scenes that we created. Due to the increased computational cost from the spiral-triangle intersections, the rendering is considerably slower for non-Euclidean orbifolds than Euclidean orbifolds. As an example, for the single frame of the spherical orbifold $*222$ that contains $12,572$ primitives (shown in Figure~\ref{fig:non_Euclidean_room} (a)), it required $7.69$ minutes when rendered at a resolution of $800 \times 600$.

\section{Conclusion and Future Work}

In this paper, we propose the use of mirror reflections as a visual metaphor for orbifolds and provide a system in which the user can interactively design any two-dimensional kaleidoscopic orbifold. At the core of our interactive design system is the ability to determine the configuration of the room (locations of the corners, shapes of the walls) given the orbifold notation provided by the user. Our system can handle not only Euclidean orbifolds, but also spherical orbifolds and hyperbolic orbifolds. In addition, we provide an enumeration of two-dimensional kaleidoscopic orbifolds based on the combination of the cardinality of the underlying polygon and the type of the universal cover. As part of the design system, we also enable the interactive construction of the universal cover of the orbifold as well as movement of the objects in the scene. The user can also generate high-quality photorealistic renderings of the scene, panorama, and animations with Mitsuba, which we have modified to account for the geodesics in spherical and hyperbolic geometry along which light travels.

Making rendering more efficient with the off-line rendering with Mitsuba is important, and we plan to investigate efficient spatial hashing data structures for the non-Euclidean spaces. In addition, the quality of the meshes used to represent the floor and the ceiling can impact the rendering speed, and we plan to explore optimal meshing structures for our purpose.

For future directions, we wish to expand our rendering system to arbitrary three-dimensional orbifolds. In addition, not all orbifolds are kaleidoscopic, and we would like to incorporate the visualization for non-kaleidoscopic orbifolds, such as the ones involving gliding reflections. Finally, we wish to explore the visualization of non-orientable orbifolds, whose universal cover is a non-orientable surface such as the Projective plane and the Klein bottle.

\acknowledgments{The authors wish to thank our anonymous reviewers for their constructive feedback. We appreciate the help from Peter Oliver during video production. Botong Qu has provided valuable suggestions during the initial discussion phase. This work was supported in part by the NSF award (\# 1619383).}

\bibliographystyle{abbrv-doi-hyperref}

\bibliography{kaleidoscopes}

\clearpage
\appendix

\section{Proofs for Kaleidoscopic Orbifold Enumeration}
\label{sec:computation}

In Table~\ref{tbl:polygon_orbifold} we provide an enumeration of 2D kaleidoscopic orbifolds based on the cardinality of their underlying polygons and the type of their universal covers. To justify this enumeration, we organize the computation behind this enumeration into a number of theorems in this section.

\begin{theorem}
  Given a kaleidoscopic orbifold $O=$\textbf{$*k_1k_2\dots k_N$}, $O$ is a spherical orbifold if $N\le 2$ and a hyperbolic orbifold if $N>4$. When $N=4$, $O$ is a hyperbolic orbifold with the only exception of \textbf{$*2222$}, which is a Euclidean orbifold.
  \label{thm:theorem_n_walls}
\end{theorem}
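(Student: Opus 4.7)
The plan is to derive everything directly from the Euler-characteristic formula (Equation~\ref{eq:euler}) together with the trichotomy that $O$ is spherical, Euclidean, or hyperbolic according as $\chi(O)$ is positive, zero, or negative. The only other fact I need is the convention stated earlier in the paper: once $N \ge 2$, every label satisfies $k_i \ge 2$ (corners with angle $\pi$ are dropped), and in the diangular case $N=2$ the two labels must agree ($k_1=k_2$) to avoid a bad orbifold. So the proof amounts to four easy case analyses on $N$.

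First I would handle the small cases $N=1$ and $N=2$ by direct substitution. For $N=1$ (the notation is $*$, with $k_1=1$), Equation~\ref{eq:euler} gives $\chi = \tfrac{1}{2} - \tfrac{1}{2} + 1 = 1 > 0$, so $O$ is spherical. For $N=2$, using $k_1=k_2 \ge 2$, one gets $\chi = \tfrac{1}{2k_1} + \tfrac{1}{2k_2} - 1 + 1 = \tfrac{1}{k_1} > 0$, again spherical.

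Next I would dispatch the large case $N \ge 5$ with a single uniform bound. Since each $k_i \ge 2$, we have $\tfrac{1}{2k_i} \le \tfrac{1}{4}$, hence
\begin{equation*}
\chi(O) \;\le\; \tfrac{N}{4} - \tfrac{N}{2} + 1 \;=\; 1 - \tfrac{N}{4} \;\le\; 1 - \tfrac{5}{4} \;=\; -\tfrac{1}{4} < 0,
\end{equation*}
so $O$ is hyperbolic.

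The only case that requires a bit of care, and therefore the closest thing to an obstacle, is $N=4$, where the bound is tight. The same estimate $\tfrac{1}{2k_i} \le \tfrac{1}{4}$ yields
\begin{equation*}
\chi(O) \;=\; \sum_{i=1}^{4}\tfrac{1}{2k_i} - 1 \;\le\; 1 - 1 \;=\; 0,
\end{equation*}
with equality iff $k_1=k_2=k_3=k_4=2$. Thus $*2222$ has $\chi=0$ and is Euclidean, while every other quadrangular kaleidoscopic orbifold has at least one $k_i \ge 3$, forcing $\chi(O)<0$ and therefore hyperbolic. Putting these cases together establishes the theorem; the argument also pinpoints why $N=4$ is the borderline cardinality, a point worth emphasizing in the write-up.
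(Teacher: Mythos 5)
Your proposal is correct and follows essentially the same route as the paper's own proof: direct evaluation of the Euler characteristic from Equation~\ref{eq:euler}, with the bound $\frac{1}{2k_i}\le\frac{1}{4}$ giving $\chi(O)\le 1-\frac{N}{4}$ and the equality case isolating \textbf{$*2222$}. The only cosmetic differences are that you split $N\ge 5$ and $N=4$ into separate cases where the paper treats $N\ge 4$ uniformly, and that you invoke $k_1=k_2$ for $N=2$ where the paper only needs $k_1,k_2>1$; neither affects the argument.
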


\begin{proof}

To show these statements, we only need to compute the Euler characteristics of these orbifolds using Equation~\ref{eq:euler} in the paper which states that Euler characteristic of a kaleidoscopic orbifold $O=$\textbf{$*k_1k_2\dots k_N$} is $\chi(O) = \sum_{i=1}^{N} \frac{1}{2k_i} -\frac{N}{2}+1$.

When $N=1$, the polygon is a monogon, with one mirror wall that self-intersects at an angle of $\frac{\pi}{k}$. The only good kaleidoscopic orbifold is when $k=1$, i.e. the wall self-intersects at an angle of $\pi$. Note that the orbifold $O=$\textbf{$*1$} is usually abbreviated as \textbf{$*$}. Thus, $\chi(O)=\frac{1}{2} - \frac{1}{2}+1=1>0$. Consequently, this orbifold is spherical. In fact, this orbifold is a hemisphere with the boundary having the reflectional symmetry.

When $N=2$, $O=$\textbf{$*k_1k_2$} where $1<k_1\le k_2$. Consequently, $\chi(O)=\frac{1}{2k_1}+\frac{1}{2k_2} - \frac{2}{2}+1=\frac{1}{2k_1}+\frac{1}{2k_2}>0$, which means that this type of orbifolds are also spherical.

On the other hand, when $N \ge 4$, we have $k_i \ge 2$ for any $1\le i \le N$. Thus, $\chi(O)=\sum_{i=1}^{N}\frac{1}{2k_i} - \frac{N}{2}+1 \le \frac{1}{4} N - \frac{N}{2}+1 =1-\frac{N}{4} \le 0$. Notice that the equality holds in the above only when $N=4$ and $k_1=k_2=k_3=k_4=2$. Consequently, this type of orbifolds is hyperbolic with the only exception of \textbf{$*2222$}, which is a Euclidean orbifold.

\end{proof}

Theorem~\ref{thm:theorem_n_walls} indicates that the more walls there are in the kaleidoscopic orbifold, the more negative its Euler characteristic and the more likely the orbifold being hyperbolic. In contrast, the fewer the walls the more positive its Euler characteristic and more likely the orbifold being spherical. The boundary between the set of spherical orbifolds and the set of hyperbolic orbifolds is drawn when $N=3$, i.e. triangular orbifolds. The next several theorem inspect this scenario, which consists a number of cases.

\begin{theorem}
  Given a triangular kaleidoscopic orbifold $O=$\textbf{$*k_1k_2k_3$} and without the loss of generality assuming that $3\le k_1 \le k_2 \le k_3$, $O$ is a hyperbolic orbifold with the only exception of \textbf{$*333$}, which is a Euclidean orbifold.
  \label{thm:theorem_triangular_min}
\end{theorem}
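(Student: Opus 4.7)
The plan is to reduce the claim to a simple inequality for the Euler characteristic defined in Equation~\ref{eq:euler}. Specializing that formula to $N = 3$ gives
\[
\chi(O) = \frac{1}{2k_1} + \frac{1}{2k_2} + \frac{1}{2k_3} - \frac{1}{2},
\]
so the sign of $\chi(O)$ is governed entirely by whether $S := \frac{1}{k_1} + \frac{1}{k_2} + \frac{1}{k_3}$ is greater than, equal to, or less than $1$. Since the background section identifies positive, zero, and negative Euler characteristic with spherical, Euclidean, and hyperbolic universal covers, respectively, the task reduces to locating $S$ relative to $1$ under the hypothesis $3 \le k_1 \le k_2 \le k_3$.

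The bound is immediate by monotonicity: each term $\frac{1}{k_i}$ is at most $\frac{1}{3}$, so $S \le 1$, with equality if and only if every term attains its maximum, i.e., $k_1 = k_2 = k_3 = 3$. I would therefore split into two cases. In the equality case, $O = \textbf{$*333$}$ and $\chi(O) = 0$, so $O$ is Euclidean, consistent with the enumeration of Euclidean orbifolds recalled earlier. Otherwise at least one $k_i$ exceeds $3$, which forces the strict inequality $S < 1$, hence $\chi(O) < 0$, so $O$ is hyperbolic.

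There is no serious obstacle here; the proof is essentially a one-line inequality once Equation~\ref{eq:euler} is invoked. The only care point is to use the assumption $k_1 \ge 3$ so that no $k_i$ contributes more than $\frac{1}{3}$ to $S$, and to note that the equality condition in the monotonicity bound isolates the single orbifold \textbf{$*333$} rather than a family. It is also worth remarking that this theorem, together with the preceding one on $N \ne 3$, completes the trichotomy between spherical, Euclidean, and hyperbolic kaleidoscopic orbifolds in the triangular case with smallest angle index at least $3$; the complementary triangular cases (those where $k_1 = 2$) are handled separately and contribute the bulk of the spherical entries in Table~\ref{tbl:polygon_orbifold}.
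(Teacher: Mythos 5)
Your proposal is correct and follows essentially the same route as the paper: both specialize Equation~\ref{eq:euler} to $N=3$, bound each term $\frac{1}{2k_i}$ by $\frac{1}{6}$ using $k_i \ge 3$ to conclude $\chi(O) \le 0$, and observe that equality forces $k_1=k_2=k_3=3$, isolating \textbf{$*333$} as the unique Euclidean case. Your rescaling to $S = \sum \frac{1}{k_i}$ versus $1$ is only a cosmetic difference.
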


\begin{proof}

First of all, the assumption that $3\le k_1 \le k_2 \le k_3$ makes sense since any permutation of $k_1$, $k_2$, and $k_3$ gives rise the same triangular orbifold.

Again, we only need to compute the Euler characteristics of these orbifolds. Here, $\chi(O)=\sum_{i=1}^{3}\frac{1}{2k_i} - \frac{3}{2}+1 \le \frac{1}{6} 3 - \frac{3}{2}+1 = 0$. Notice that the equality holds in the above only when $k_1=k_2=k_3=3$. Consequently, this type of orbifolds is hyperbolic with the only exception of \textbf{$*333$}, which is a Euclidean orbifold.

\end{proof}

Theorem~\ref{thm:theorem_triangular_min} states that for triangular orbifolds, the higher the minimal order of symmetry at the corners, namely $k_1$, the more likely the orbifold is hyperbolic. We now consider the case when $k_1=2$.

\begin{theorem}
  Given a triangular kaleidoscopic orbifold $O=$\textbf{$*2k_2k_3$} where $2 \le k_2 \le k_3$, $O$ is a spherical orbifold if $k_2=2$. In contrast, when $k_2\ge 4$, $O$ is a hyperbolic orbifold with the only exception of \textbf{$*244$}, which is a Euclidean orbifold.
  \label{thm:theorem_triangular_2_min}
\end{theorem}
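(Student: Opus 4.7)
The plan is to proceed exactly in the spirit of Theorems \ref{thm:theorem_n_walls} and \ref{thm:theorem_triangular_min}: specialize the Euler characteristic formula (Equation \ref{eq:euler}) to the case $N=3$ and $k_1=2$, then classify the sign of $\chi(O)$ by the value of $k_2$. Substituting gives
\[
\chi(O) \;=\; \frac{1}{4} + \frac{1}{2k_2} + \frac{1}{2k_3} - \frac{3}{2} + 1 \;=\; \frac{1}{2k_2} + \frac{1}{2k_3} - \frac{1}{4},
\]
so the whole theorem reduces to checking the sign of $\frac{1}{2k_2}+\frac{1}{2k_3}-\frac{1}{4}$ under the hypothesis $2 \le k_2 \le k_3$.

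First I would handle the spherical case $k_2 = 2$. Here the expression collapses to $\frac{1}{4}+\frac{1}{2k_3}-\frac{1}{4}=\frac{1}{2k_3}$, which is strictly positive for every integer $k_3 \ge 2$, giving the spherical conclusion immediately. Next I would treat $k_2 \ge 4$. Since $k_3 \ge k_2 \ge 4$, we have $\frac{1}{2k_2} \le \frac{1}{8}$ and $\frac{1}{2k_3} \le \frac{1}{8}$, so $\chi(O) \le \frac{1}{8} + \frac{1}{8} - \frac{1}{4} = 0$. The only way to achieve equality in both estimates simultaneously is $k_2=k_3=4$, i.e.\ the orbifold \textbf{$*244$}, which is therefore Euclidean; in every other case at least one of the two inequalities is strict and $\chi(O)<0$, yielding a hyperbolic orbifold.

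There is essentially no hard step here, since the argument is a direct specialization of Equation \ref{eq:euler} and a monotonicity observation in $k_2$ and $k_3$. The one thing worth being careful about is the boundary analysis: I should explicitly note that the hypothesis of the theorem skips the case $k_2=3$ (which is covered separately in the enumeration as the rows \textbf{$*23k_3$} spanning spherical $k_3\in\{3,4,5\}$, Euclidean $k_3=6$, and hyperbolic $k_3\ge 7$), so that the reader sees the statement is consistent with Table \ref{tbl:polygon_orbifold} even though the theorem itself only asserts the two extremal regimes $k_2=2$ and $k_2\ge 4$. With that caveat recorded, the proof is just the two sign computations above.
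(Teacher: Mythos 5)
Your proposal is correct and follows essentially the same route as the paper's own proof: specialize Equation~\ref{eq:euler} to $N=3$, $k_1=2$, observe $\chi(O)=\frac{1}{2k_3}>0$ when $k_2=2$, and bound $\chi(O)\le 0$ when $k_3\ge k_2\ge 4$ with equality exactly at \textbf{$*244$}. The extra remark that $k_2=3$ is deliberately excluded (being handled by the \textbf{$*23k_3$} case) is a reasonable clarification but not needed for the argument.
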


\begin{proof}

Since $N=3$, when $k_1=k_2=2$ we find the Euler characteristic $\chi(O)=\frac{1}{4} + \frac{1}{4}+\frac{1}{2k_3} - \frac{3}{2}+1 =\frac{1}{2k_3}>0$. Thus, in this case the orbifold is always spherical.

On the other hand, when $k_3\ge k_2\ge 4$, the Euler characteristics is $\chi(O)=\frac{1}{4} + \frac{1}{2k_2}+\frac{1}{2k_3}- \frac{3}{2}+1 \le \frac{1}{4} + \frac{1}{8}+\frac{1}{8}- \frac{3}{2}+1 = 0$. Notice that the equality holds in the above only when $k_2=k_3=4$. Consequently, this type of orbifolds is hyperbolic with the only exception of \textbf{$*244$}, which is a Euclidean orbifold.

\end{proof}

The last remaining case is when $O=$\textbf{$*23k_3$}, which is covered in the next theorem.

\begin{theorem}
  Given a triangular kaleidoscopic orbifold $O=$\textbf{$*23k_3$} where $3\le k_3$, $O$ is a spherical orbifold if $k_3<6$, a Euclidean orbifold if $k_3=6$, and a hyperbolic orbifold if $k_3>6$.
  \label{thm:theorem_triangular_23k}
\end{theorem}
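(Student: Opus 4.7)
The plan is to directly compute the Euler characteristic $\chi(O)$ using Equation~\ref{eq:euler} and determine its sign as a function of $k_3$, then invoke the classification (stated just after Equation~\ref{eq:euler}) that the sign of $\chi(O)$ determines whether the universal cover is spherical, Euclidean, or hyperbolic. This is essentially the same strategy used in Theorems~\ref{thm:theorem_n_walls}, \ref{thm:theorem_triangular_min}, and \ref{thm:theorem_triangular_2_min}, so no new technique is required.

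First, I would substitute $N=3$, $k_1=2$, $k_2=3$ into Equation~\ref{eq:euler} to obtain
\begin{equation*}
\chi(O) = \frac{1}{4} + \frac{1}{6} + \frac{1}{2k_3} - \frac{3}{2} + 1 = \frac{1}{2k_3} - \frac{1}{12}.
\end{equation*}
Next, I would analyze the sign of this expression by comparing $\frac{1}{2k_3}$ with $\frac{1}{12}$: since $k_3 \mapsto \frac{1}{2k_3}$ is strictly decreasing for positive $k_3$, the critical value is $k_3 = 6$, which gives $\chi(O) = 0$. For $k_3 < 6$ we have $\frac{1}{2k_3} > \frac{1}{12}$ and hence $\chi(O) > 0$; for $k_3 > 6$ we have $\chi(O) < 0$.

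Finally, I would apply the criterion following Equation~\ref{eq:euler}: $\chi(O)>0$ means $O$ is spherical, $\chi(O)=0$ means $O$ is Euclidean, and $\chi(O)<0$ means $O$ is hyperbolic. Combining this with the sign analysis above yields the three cases claimed in the theorem. I do not anticipate any obstacle here; the only thing worth remarking is that the hypothesis $k_3\ge 3$ (rather than $k_3\ge 2$) is consistent with the earlier enumeration in Theorem~\ref{thm:theorem_triangular_2_min}, which already handled the case $k_2=2$ separately, so there is no overlap with prior cases and the proof is complete.
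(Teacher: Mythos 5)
Your proposal is correct and follows essentially the same route as the paper's proof: substitute into Equation~\ref{eq:euler} to get $\chi(O)=\frac{1}{2k_3}-\frac{1}{12}=\frac{6-k_3}{12k_3}$ and read off the sign. The added remark about why $k_3\ge 3$ avoids overlap with the earlier theorems is a nice touch but not needed for the argument.
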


\begin{proof}

The Euler characteristic of this type of orbifolds is $\chi(O)=\frac{1}{4} + \frac{1}{6}+\frac{1}{2k_3} - \frac{3}{2}+1 = \frac{1}{2k_3}-\frac{1}{12}=\frac{6-k_3}{12k_3}$. Thus, $\chi(O)$ is positive when $k_3<6$, zero when $k_3=6$, and negative when $k_3>6$. Consequently, $O$ is a spherical orbifold if $k_3<6$, a Euclidean orbifold if $k_3=6$, and a hyperbolic orbifold if $k_3>6$.

\end{proof}

Interestingly, each of the above theorems contains a Euclidean orbifold: \textbf{$*2222$} for Theorem~\ref{thm:theorem_n_walls}, \textbf{$*333$} for Theorem~\ref{thm:theorem_triangular_min}, \textbf{$*244$} for Theorem~\ref{thm:theorem_triangular_2_min}, and \textbf{$*236$} for Theorem~\ref{thm:theorem_triangular_23k}. Not only do these facts confirm that there are only four Euclidean kaleidoscopic orbifolds, but they also show the transition from spherical orbifolds to hyperbolic orbifolds with more walls and higher-order symmetries at the corners.

\begin{figure*}[t]
	\centering%
	$\begin{array}{@{\hspace{0.0in}}c@{\hspace{0.1in}}c@{\hspace{0.1in}}c}
	\includegraphics[width=2.3in]{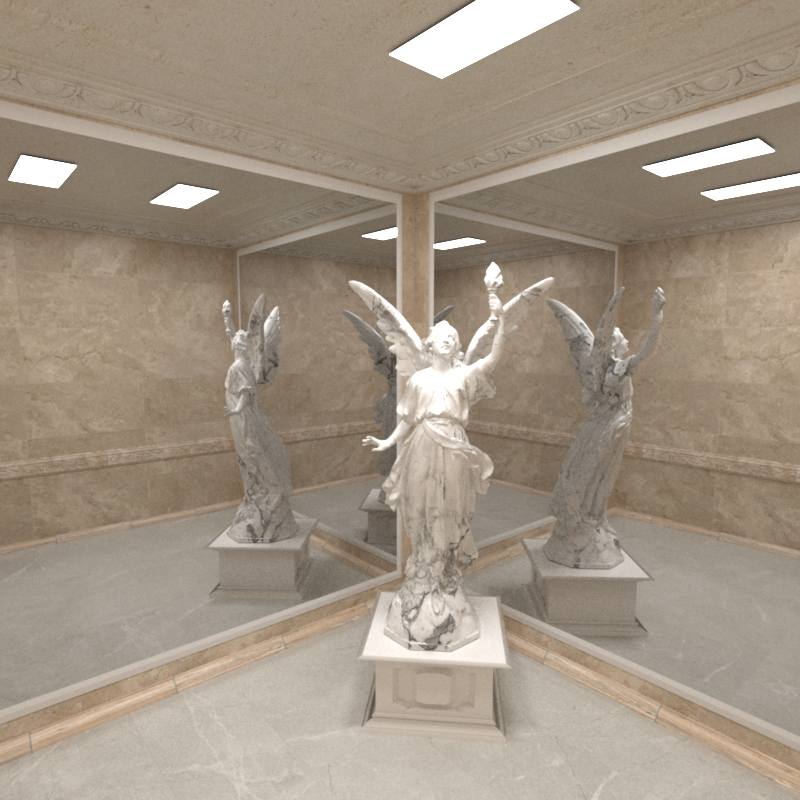}
	&\includegraphics[width=2.3in]{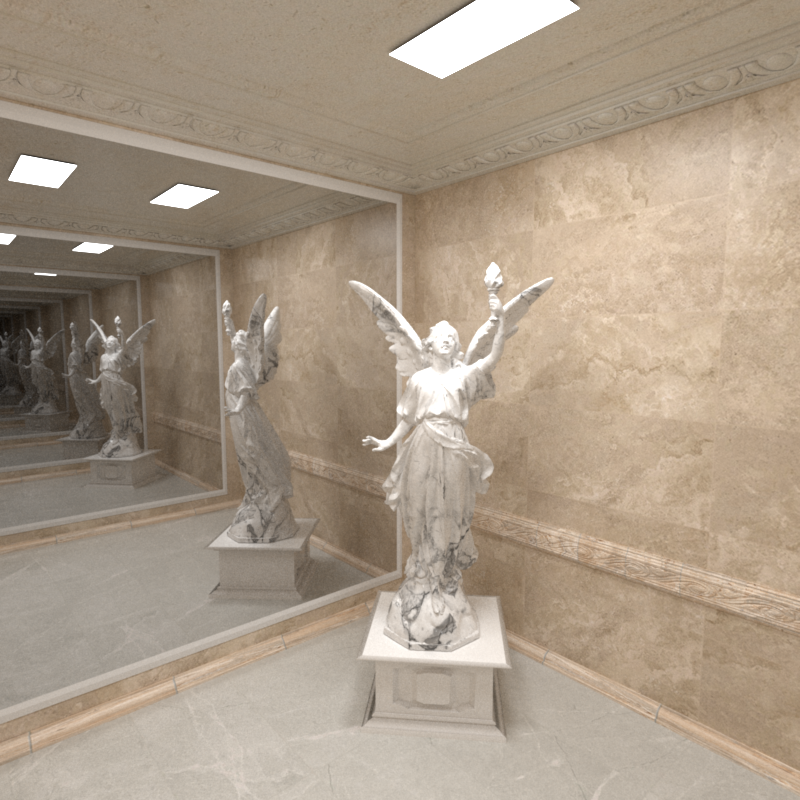}
	&\includegraphics[width=2.3in]{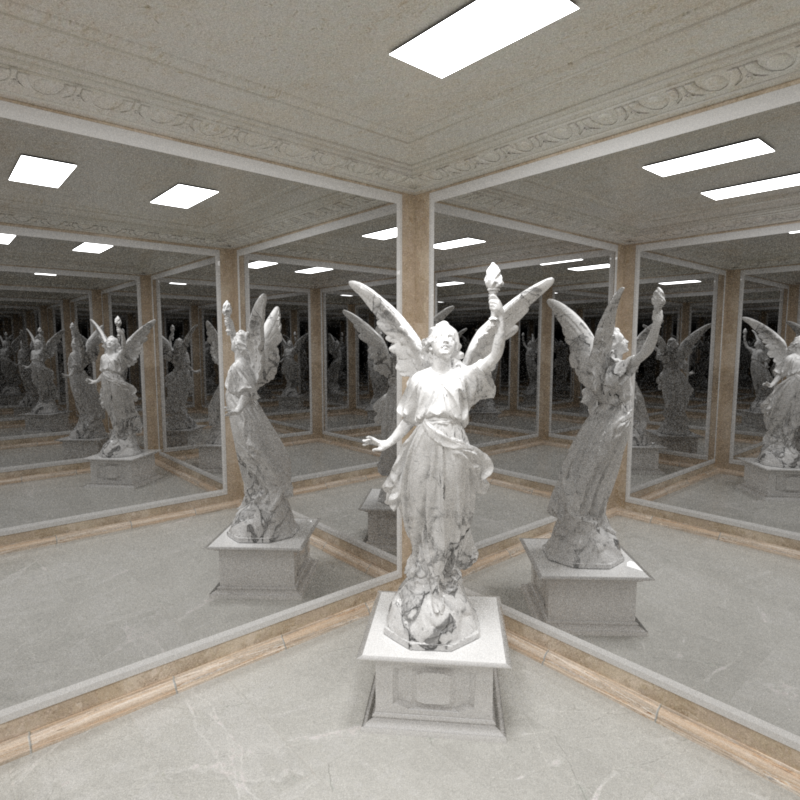}  \\
	(a) \quad \textbf{two adjacent mirrors} & (b) \quad \textbf{two parallel mirrors} & (c) \quad \textbf{four mirrors} \\
	\end{array}$
	\caption[]{A square room with two or four mirrors. The case in the four mirror room (c) corresponds to a Euclidean orbifold \textbf{$*2222$}. }
	\label{fig:square_room}
\end{figure*}

\begin{figure*}[t]
	\centering%
	$\begin{array}{@{\hspace{0.0in}}c@{\hspace{0.1in}}c@{\hspace{0.1in}}c}
	\includegraphics[width=2.3in]{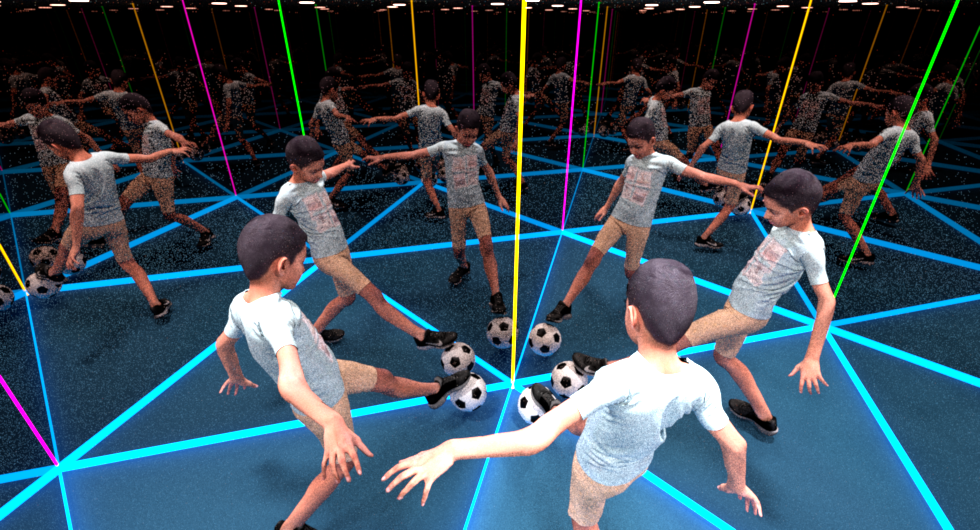}
	&\includegraphics[width=2.3in]{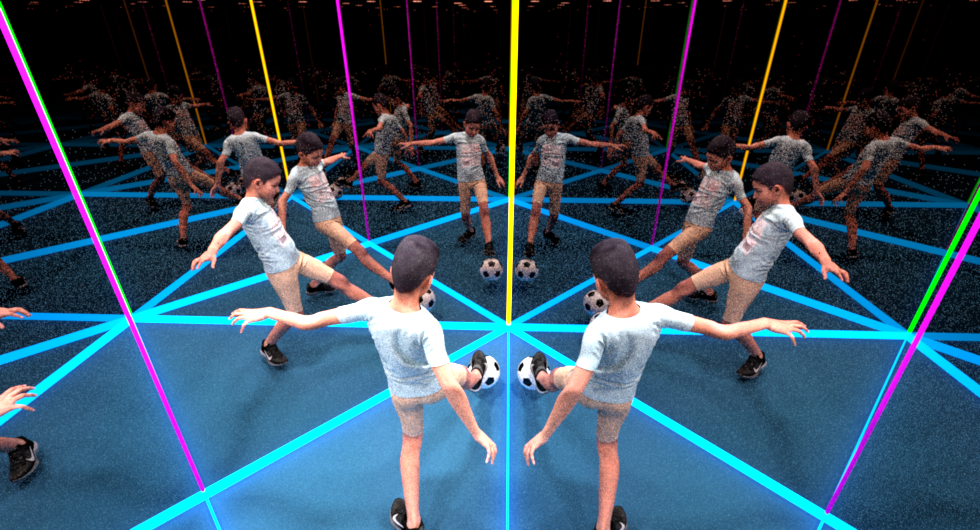}
	&\includegraphics[width=2.3in]{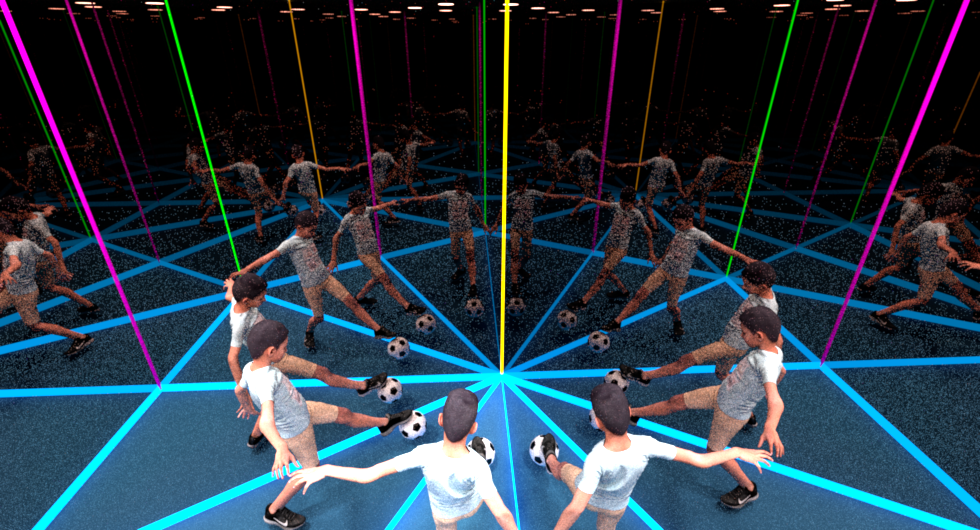}  \\
	(a) \quad \textbf{*$333$} & (b) \quad \textbf{$*244$} & (c) \quad \textbf{$*236$} \\
	\end{array}$
	\caption[]{The three triangular Euclidean kaleidoscopic orbifolds.   }
	\label{fig:triangular_room}
\end{figure*}

\section{Optics-Based Visualization for Orbifold Concept and Properties}
\label{sec:more_visualization_results}

Our system can be used to generate example scenarios to illustrate important concepts and properties of orbifolds such as the following. Given a room with the statue Lucy, we first mount a mirror each on two adjacent walls (Figure~\ref{fig:square_room} (a)). This leads to an illusion of a space that is four times as large as the room without a mirror. The virtual space is the {\em universal cover} of the orbifold (the original room).

In addition, the symmetry for the room can be understood by checking the orientations of the statues in the space. While the statue has her left hand up holding the torch in the original room, each mirror generates a virtual statue who raises the torch by her right hand (a reflection). Interestingly, reflecting the statue in the first virtual room with respect to the second mirror leads to the third virtual statue, who switches back to her left hand to raise the torch. However, this virtual statue faces the opposite direction of the statue in the original room, i.e. a rotation by $\pi$. One can consider the reflected and rotated virtual copies as the result of the {\em action} of the {\em symmetry group} of the underlying orbifold. This group consists of the identity action, two reflections (one per each mirror), and one rotation (the composite of the two mirrors).

By moving one of the mirrors to the wall opposite the other mirror, we obtain a different scene where there are infinitely many copies of the original room (Figure~\ref{fig:square_room} (b)). In fact, the universal cover of this orbifold can be generated by first grouping the original room with one of the reflections and then translating infinitely many times the two rooms by a distance that is a multiple of twice the room depth. The union of the two rooms (the real room and the virtual room) is thus referred to as a {\em translational cover}.

When a mirror is mounted on each wall (Figure~\ref{fig:square_room} (c)), we obtain the orbifold whose translational cover is the same as the universal cover of the room shown in Figure~\ref{fig:square_room} (a). This translational cover is then translated in two mutually perpendicular directions. Note that this is the first orbifold (in this example) that we have encountered where all walls have a mirror. This room corresponds to the \textbf{$*2222$} orbifold. Each of the corner has an angle of $\frac{\pi}{2}$, thus its notation. At such a corner, there are $2k$ copies of the original room forming $k$ pairs. Inside each pair, one of the rooms is a rotational copy of the original room while the other is a reflectional copy. A {\em kaleidoscopic orbifold} has a {\em transformation group} that is generated by mirror reflections. The {\em subgroup} for each corner is thus $\mathbb{D}_k$, the {\em Dihedral group} of {\em order} $k$. In the \textbf{$*2222$} case, the symmetry group at every corner is the same, i.e. $\mathbb{D}_2$.

\textbf{$*2222$} is one of the four Euclidean kaleidoscopic orbifolds, i.e. whose universal cover is the Euclidean plane. Figure~\ref{fig:triangular_room} shows the other three such orbifolds: (1) \textbf{$*333$}, (2) \textbf{$*244$}, and (3) \textbf{$*236$}. The \textbf{$*333$} orbifold (Figure~\ref{fig:triangular_room} (a)) is obtained by placing three mirror walls in a $\frac{\pi}{3}-\frac{\pi}{3}-\frac{\pi}{3}$ triangular room. Its translational cover consists of six copies of the original room ($\mathbb{D}_3$). Similarly, the \textbf{$*244$} orbifold (Figure~\ref{fig:triangular_room} (b)) is obtained by placing three mirror walls in a $\frac{\pi}{2}-\frac{\pi}{4}-\frac{\pi}{4}$ triangular room. Its translational cover consists of eight copies of the original room ($\mathbb{D}_4$). The \textbf{$*236$} orbifold (Figure~\ref{fig:triangular_room} (c)) is generated by placing three mirror walls in a $\frac{\pi}{2}-\frac{\pi}{3}-\frac{\pi}{6}$ triangular room. Its translational cover consists of $12$ copies of the original room ($\mathbb{D}_6$). Notice that the symmetry group can vary from corner to corner. In addition, note that an orbifold does not depend on which corner is referred to as the first corner. Thus, \textbf{$*236$} and \textbf{$*362$} represent the same orbifold. Similarly, the orbifold does not change when the corners are numbered in the opposite order. Thus, \textbf{$*236$} and \textbf{$*632$} also represent the same orbifold.

The optics-based visual metaphor is also capable of showing non-Euclidean orbifolds, such as those shown in Figures~\ref{fig:teaser},~\ref{fig:str_curved_rays},
and~\ref{fig:non_Euclidean_room} in the main paper.

\end{document}